\def\tr{\mathrm{tr}}
\def\diag{\mathrm{diag}}
\DeclareMathOperator{\prob}{\mathbb{P}}
\newcommand{\vect}[1]{\mathbf{#1}}
\newcommand{\bx}{\mathbf{x}}
\newcommand{\br}{\mathbf{r}}
\newcommand{\bz}{\mathbf{z}}
\newtheorem{theorem}{Theorem}
\newtheorem{lemma}{Lemma}
\newtheorem{corollary}{Corollary}
\newtheorem{example}{Example}
\newtheorem{remark}{Remark}
\begin{document}
%
\title{Large System Analysis of the Energy Consumption Distribution in Multi-User MIMO Systems with Mobility}
%
%
%

\author{Luca~Sanguinetti,~\IEEEmembership{Member,~IEEE,}
        Aris~L.~Moustakas,~\IEEEmembership{Senior Memebr,~IEEE,}\\
        Emil~Bj{\"o}rnson,~\IEEEmembership{Member,~IEEE,}
        and~Merouane Debbah,~\IEEEmembership{Fellow,~IEEE}
\thanks{\newline \indent L. Sanguinetti is with the
University of Pisa, Dipartimento di Ingegneria dell'Informazione, Pisa, Italy (luca.sanguinetti@iet.unipi.it) and is also with Sup\'elec, Gif-sur-Yvette, France. A. L. Moustakas is also with Department of Physics, National \& Capodistrian University of Athens, Athens, Greece (arislm@phys.uoa.gr). E. Bj\"ornson with ISY, Link\"{o}ping University, Link\"{o}ping, Sweden (emil.bjornson@liu.se).  M. Debbah is with Sup\'elec, Gif-sur-Yvette, France (merouane.debbah@supelec.fr). 
\newline\indent This research has been supported by the FP7 Network of Excellence in Wireless COMmunications NEWCOM\# (Grant agreement no. 318306). L.~Sanguinetti is also funded by the People Programme (Marie Curie Actions) FP7 PIEF-GA-2012-330731 Dense4Green. A. L. Moustakas is the holder of the DIGITEO "ASAPGONE" Chair. The research of M. Debbah has been also supported by the ERC Starting Grant 305123 MORE. 
\newline \indent Parts of this paper were presented at the IEEE International Conference on Acoustics, Speech and Signal Processing (ICASSP), Florence, Italy, May 4 -- 9, 2014 \cite{Sanguinetti2014_ICASSP}.
}}

\maketitle

\begin{abstract}
In this work, we consider the downlink of a single-cell multi-user MIMO system in which the base station (BS) makes use of $N$ antennas to communicate with $K$ single-antenna user equipments (UEs). The UEs move around in the cell according to a random walk mobility model. We aim at determining the energy consumption distribution when different linear precoding techniques are used at the BS to guarantee target rates within a finite time interval $T$. The analysis is conducted in the asymptotic regime where $N$ and $K$ grow large with fixed ratio under the assumption of perfect channel state information (CSI). Both recent and standard results from large system analysis are used to provide concise formulae for the asymptotic transmit powers and beamforming vectors for all considered schemes. These results are eventually used to provide a deterministic approximation of the energy consumption and to study its fluctuations around this value in the form of a central limit theorem. Closed-form expressions for the asymptotic means and variances are given. Numerical results are used to validate the accuracy of the theoretical analysis and to make comparisons. We show how the results can be used to approximate the probability that a battery-powered BS runs out of energy and also to design the cell radius for minimizing the energy consumption per unit area. The imperfect CSI case is also briefly considered.
\end{abstract}

\begin{IEEEkeywords}
Energy consumption, multi-user MIMO, downlink, linear pre-coding techniques, user mobility, random walk, Brownian motion, large system analysis, random matrix, central limit theorem.
\end{IEEEkeywords}

%
\IEEEpeerreviewmaketitle

\section{Introduction}
\IEEEPARstart{T}{he data}  traffic in cellular networks has been increasing exponentially for a long time and is expected to continue this trend, at least for the next five years \cite{Cisco11}. Currently, one of the biggest challenges related to the traffic growth is the increasing energy consumption of the cellular infrastructure equipments \cite{Fehske11}. This means that the energy consumption must be taken into account from the outset when designing cellular networks of the future. This is particularly important when deploying base stations (BSs) in new rural regions of the world, where the electrical grid is unreliable or even non-existing. Off-grid deployments rely on combinations of diesel generators, batteries, and local energy harvesting  (e.g., from solar panels) \cite{Fehske11}. Since the supply of energy is either limited or fluctuates with the harvesting, it is of paramount importance to operate the BS such that it will not run out of energy (also known as power outage).

A lot of work has been done recently to better understand the energy consumption tradeoffs in cellular networks \cite{Chen2011}. However, most of the analysis is carried out without taking into account the impact of user equipment (UE) mobility. This is a major issue since UE movements play a key role in determining the required energy in outdoor scenarios. Consider, for example, the case in which all UEs are positioned at the cell edge and assume that some target rate must be guaranteed at each UE. This would result into an increased power consumption, which might rapidly make the BS run out of energy if the UEs remain at the cell edge for an extended period of time (due to the lack of mobility). Therefore, it is of major importance to quantify not only how often such unlikely events happen, but also how long they last. All this depends on the UE mobility model and in particular on how their positions, velocities and accelerations change over time. A first attempt in this direction can be found in \cite{Decreusefond13} in which the authors make use of stochastic geometry to model the energy consumption of a cellular network where each UE is connected to its closest single-antenna BS. In particular, a refined energy consumption model is proposed that includes the energy of broadcast messages, traffic activity, and user mobility. 

The theory of random walks has been analyzed extensively in probability theory and it has been applied to a variety of fields. In a nutshell, it is a mathematical formalization of a path that consists of a succession of random steps \cite{paul_book}. If the time elapsed between two successive steps is relatively small (possibly zero), then the central limit theorem implies that the distribution of the corresponding path-valued process approaches that of a Brownian motion \cite{morters_book}. A more involved model (known as L\'evy flight) corresponds to a random walk in which the step lengths have a probability distribution that is heavy tailed \cite{paul_book}. Another one that has been widely investigated is the so-called random waypoint process in which a mobile user picks a random destination and travels with constant velocity to reach it. {Although human mobility is rarely random, in \cite{Rhee2011_LevyHumanMobility, Scafetta2011_LevyWalkHumanMobility} the authors show that a L\'evy flight contains some important statistical similarities with human mobility. This suggests that a random walk model might be considered as a good one to achieve a reasonable tradeoff between accuracy and analytical tractability \cite{Rhee2011_LevyHumanMobility, Scafetta2011_LevyWalkHumanMobility}. Despite the well-understood properties of the random mobility models above, not much progress
has been made towards providing analytical results for the statistics of metrics in wireless communications.}

Motivated by this lack, in this work we consider the downlink of a single-cell multi-user multiple-input multiple-output (MIMO) system. The BS is equipped with an array of $N$ antennas and serves simultaneously $K$ UEs by using linear precoding. The UEs are assumed to move around in the cell following (independent) random walks. Based on the random walk mobility pattern, we aim at determining the statistical distribution of the energy consumption required to guarantee a given set of UE data rates within a finite time interval of length $T$. {The analysis is conducted in the asymptotic regime where $N$ and $K$ grow large with fixed ratio $c = K/N$ and it is applied to several linear precoding schemes, under the assumption of perfect channel state information (CSI) at the BS.} In particular, we first focus on the asymptotic design of the precoder minimizing the transmit power at each time slot while satisfying the rate requirements \cite{Wiesel06,Bjornson2012a}. Differently from \cite{Huang2012,Zakhour2012,Asgharimoghaddam14}, the asymptotic transmit powers and beamforming vectors are computed using the approach adopted in \cite{Romain2014}, which provides us a much simpler means to overcome the technical difficulties arising with the application of standard random matrix theory tools (e.g. \cite{Zakhour2012}). As already pointed out in \cite{Huang2012,Zakhour2012,Asgharimoghaddam14}, in the asymptotic regime the optimal values can be computed in closed-form through a nice and simple expression, which depends only on the average channel attenuations and rate requirements. The asymptotic analysis is then applied to design different heuristic linear precoding techniques, namely, maximum ratio transmission (MRT), zero-forcing (ZF), and regularized ZF (RZF) \cite{Wagner12,Evans2013,Sanguinetti2014_Globecom,Bjornson:2013kc}. {In particular, we concentrate on a RZF scheme (e.g. \cite{Evans2013,Sanguinetti2014_Globecom}) that exploits knowledge of the average channel attenuations of UEs in designing the beamforming vectors and provide the closed-form expression for the optimal regularization parameter in the asymptotic regime.} 

The asymptotic analysis above is then used to provide a deterministic approximation of the energy consumption (which is asymptotically accurate in the large system limit)  and to study its fluctuations around this value under the form of a central limit theorem. We compute closed-form expressions for the asymptotic means and variances of all considered schemes exploiting the connection between random walk and Brownian motion. It turns out that the fluctuations induced by the randomness in the UE positions make the variance of the energy decrease as $1/K$ for all considered schemes. This makes the variability induced by the small-scale fading less important for the analysis since it scales as $1/K^2$ \cite{Sanguinetti2014_ICASSP,Bai2004_CLT_covariance_matrices}. The analytical expressions are validated by means of extensive simulations and are shown to closely match the numerical results for different settings. Finally, we exemplify how the new statistical characterization can be used to compute the probability that a battery-powered BS runs out of energy within a certain time period and also to design the cell radius for minimizing the energy consumption per unit area.

The remainder of this paper is organized as follows. \footnote{The following notation is used throughout the paper. The notations ${\mathsf{E}}_{\mathbf{z}}\{\cdot\}$, {${\mathsf{VAR}}_{\mathbf{z}}\{\cdot\}$ and ${\mathsf{COV}}_{\mathbf{z}}\{\cdot\}$} indicate that the expectation, variance and covariance are computed with respect to $\mathbf{z}$. The notation $||\cdot||$ stand for the Euclidean norm whereas $|\mathcal S|$ is used to denote the cardinality of the enclosed set $\mathcal S$. We let $\mathbf{I}_K$ denote the $K \times K$ identity matrix, whereas ${\bf{1}}_K$ and ${\bf{0}}_K$ are the $K$-dimensional unit and null column vectors, respectively. We use $\mathcal{CN}(\cdot,\cdot)$ to denote a multi-variate circularly-symmetric complex Gaussian distribution whereas $\mathcal{N}(\cdot,\cdot)$ stands for a real one. The notation $\mathop  {\longrightarrow} \limits^{\mathcal D}$ denotes convergence in distribution whereas $\mathop  {\longrightarrow} \limits^{a.s.}$ stands for almost surely equivalent. We denote $J_n(\cdot)$ the $n$-order Bessel function and use $Q(z)$ to indicate the Gaussian tail function while $Q^{-1}(z)$ indicates its inverse. We use $\nabla f (\bz)$ to indicate the vector differential operator whereas $\nabla^2 f(\bz)$ stands for the Laplace operator. The notation $x \sim y$ with $x,y$ real-valued numbers means that $x$ is approximately equal to $y$.} The next section introduces the system model and formulates the problem under investigation. Section \ref{Optimal_precoding} revisits the optimal linear precoder design in both finite and asymptotic regimes. Section \ref{Heuristic_precoding} deals with the asymptotic analysis and design of MRT, ZF and RZF. All this is then used in Section \ref{Energy_fluctuations} to prove that in the asymptotic regime the energy consumption for all considered schemes converges to a Gaussian random variable whose mean and variance are derived analytically. In Section \ref{Numerics}, numerical results are used to validate the theoretical analysis and to make comparisons among different processing schemes. Some applications of the statistical characterization of the energy consumption are also exemplified in Section \ref{Numerics}. {In addition, the imperfect CSI case is briefly addressed for the ZF and RZF precoding schemes.} Finally, some conclusions and discussions are drawn in Section \ref{Conclusions}.

\section{System model and problem statement}
\subsection{System model}
We consider the downlink of a single-cell multi-user MIMO system in which the BS makes use of $N$ antennas to communicate with $K$ single-antenna UEs over the whole channel bandwidth of $W$ Hz. Linear processing is used for data precoding and {perfect CSI is assumed to be available at the BS. As mentioned above, the imperfect CSI case will be briefly addressed in Section \ref{imperfect_CSI}.}  The $K$ active UEs are moving around within the coverage cell $\mathcal C$ of area $|\mathcal C|$. We assume that the number of UEs $K$ can increase arbitrarily, while the area $|\mathcal C|$ is maintained fixed. 
To simplify the computations, we assume that the cell is circular with radius $R$ such that $|\mathcal C|= \pi R^2$ and that the BS is located in the centre of the cell with its $N$ transmit antennas adequately spaced apart (such that the channel components to any UE are uncorrelated). The location of UE $k$ at time $t$ is denoted by $\mathbf{x}_k(t) \in \mathbb{R}^2$ (in meters) and it is computed with respect to the BS. We call ${\bf{X}}_k = \{\mathbf{x}_k(t); t\ge 0\}$ and model ${\bf{X}}_k$ and ${\bf{X}}_j$ for $k\ne j$ as obtained from identical and independent random walks constrained into the region $\mathcal C$ \cite{Camp02asurvey}. We let $\ell$ be the (average) size of a step of the random walk and let $\xi$ be the corresponding (average) time elapsed between two successive steps. The (average) velocity of each user between consecutive steps is thus given by $\sim \ell/\xi$. We assume that after each step the user performs a next step in an independent random direction.\footnote{The model can be generalized to include correlated steps \cite{morters_book}. In this case, $\ell$ and $\xi$ will correspond to the length and time over which the correlation in the direction of the walker is lost.} Thus, ${\bf x}_k(t)$ can be expressed as a sum over $\left\lfloor {t/\xi} \right\rfloor$ independent and identically distributed zero-mean steps $\Delta {\bf x}_k(i)$, i.e., $  {\bf x}_k(t) = \sum_{i=1}^{\left\lfloor {t/\xi} \right\rfloor} \Delta{\bf x}_k(i)$
with $|\Delta{\bf x}_k(i)| = \ell$. As it is known (see Appendix B for a short discussion on this), if $\ell \to 0$ and $\xi \to 0$ with $\ell^2/\xi$ kept fixed, then the random walk converges (due to the central limit theorem) to a Brownian motion with diffusion coefficient $D$ such that $4D = \ell^2/\xi$.

We call $\mathbf{h}_{k}(t)= [{h}_{k,1}(t),{h}_{k,2}(t),\ldots,{h}_{k,N}(t)]^T \in \mathbb{C}^{N\times1}$ the vector whose generic entry $h_{k,n}(t)$ represents the channel propagation coefficient between the $n$th antenna at the BS and the $k$th UE at time $t$. In particular, we assume that
\begin{align}\label{h_k}
\mathbf{h}_{k}(t)= \sqrt{l(\mathbf{x}_{k}(t))}\mathbf{w}_{k}(t)
\end{align}
where $\mathbf{w}_{k}(t) \sim \mathcal {CN} (0,{\bf I}_K)$ accounts for the small-scale fading channel and {$l(\cdot): \mathbb{R}^2 \to \mathbb{R}^+$ describes the large-scale channel fading at different user locations; that is, $l(\mathbf{x}_{k}(t))$ is the average channel attenuation due to pathloss and shadowing at distance $||\mathbf{x}_{k}(t)||$}. {As seen, $l(\mathbf{x}_{k}(t))$ is assumed to be independent from the transmit antenna index $n$. This is a reasonable assumption since the distances between UEs and BS are much larger than the distance between the antennas \cite{Calcev2007}.} Since the forthcoming analysis does not depend on a particular choice of $l(\cdot)$ as long as it is a function of the user distance and is bounded from below, we keep it generic. {Moreover, in all subsequent discussions we assume that $\{l(\mathbf{x}_{k}(t)); k=1,2,\ldots,K\}$ are known at the BS (see Section \ref{imperfect_CSI} for a brief discussion on this).}

\begin{example}\label{example_1}
{In the simulations, we assume that the average channel attenuation at a generic position $\mathbf{x}$ is dominated by the pathloss and is evaluated as \cite{Adhikary2013}}
\begin{align}\label{avg_pathloss_1}
l(\mathbf{x}) =  2L_{\bar x}\left({1+ \frac{\|\mathbf{x}\|^{\beta}}{\bar x^{\beta}}}\right)^{-1}
\end{align}
where $\beta\ge 2$ is the pathloss exponent, $\bar x > 0$ is some cut-off parameter and $L_{\bar x}$ is a constant that regulates the attenuation at distance $\bar x$. The average inverse channel attenuation $\mathbb{E}_\mathbf{x} \{  l^{-1}(\mathbf{x})\}$ plays a key role in all subsequent discussions. Simple integration shows that, if $l(\mathbf{x})$ is given by \eqref{avg_pathloss_1} and the initial positions of UEs are assumed to be uniformly distributed within the cell, then
\begin{align}\label{avg_pathloss}
\frac{1}{|\mathcal C|}\int_{\mathcal C} \frac{1}{l(\mathbf{x})}d{\bf x} = \frac{R^{\beta}}{2\bar x^{\beta} L_{\bar x}}\left(\frac{2}{2+\beta} + \frac{\bar x^{\beta}}{R^{\beta}}\right).
\end{align}
\end{example}

We assume that the temporal correlations of $\{\mathbf{w}_{k}(t); t \ge 0\}$ are such that the coherence time is $\Delta\tau\sim \lambda\xi/\ell$, where $\lambda$ is the wavelength and $\ell/\xi$ is the velocity of each UE due to the random walk. 

\begin{example}\label{example_2}
Consider a cell with radius $R= 500$ m and operating at the carrier frequency of $f_c = 2.4$ GHz so that $\lambda = 0.125$ m. Assume $\ell = 50$ m and $\xi = 30$ seconds. Therefore, the average velocity of each UE within two consecutive time steps is $\ell/\xi = 100$ m/minute and the diffusion coefficient is $D = \ell^2/(4\xi) = 1250$ m$^2$/minute. Under these assumptions, the coherence time $\Delta\tau$ of the small-scale fading is approximately $\Delta\tau \sim \lambda \xi/\ell = 0.075$ s. On the other hand, the coherence time of UE movements (also known as forgetting time) is $\sim R^2/D=3.33$ hrs, which is much larger than $\Delta\tau$. All this will be used later on to give some intuitions about the different impact of small-scale fading and UE mobility on the energy fluctuations.
\end{example}

\subsection{Problem statement}
We call $\mathbf{s}(t)  = [s_1(t),s_2(t),\ldots,s_K(t)]^T\in \mathbb{C}^{K\times 1}$ the signal transmitted at time $t$ and denote by $\mathbf{V}(t) = [\mathbf{v}_1(t),\mathbf{v}_2(t),\ldots,\mathbf{v}_K(t)]\in \mathbb{C}^{N\times K}$ its precoding matrix. We assume that $\mathbf{s}(t)$ {originates from a Gaussian codebook with zero mean and covariance matrix} $\mathbb{E}_{\mathbf{s}}[\mathbf{s}(t)\mathbf{s}^H(t)] = \mathbf{I}_K$. The sample ${y}_{k}(t)$ received at the $k$th UE at time $t$ takes the form
\begin{align}\label{y_k}
{y}_{k}(t)= \mathbf{h}_{k}^H(t)\mathbf{V}(t)\mathbf{s}(t)+{n}_{k}(t)
\end{align}
where ${n}_{k}(t) \sim \mathcal{CN}(0, \sigma^2)$ is the additive noise. Under the assumption of perfect CSI at UEs, the SINR at the $k$th UE is easily written as \cite{Bjornson:2013kc}
\begin{equation}\label{SINR_k}
{\rm{SINR}}_k(t) = \frac{\left|\mathbf{h}_k^H(t)\mathbf{v}_k(t)\right|^2}{\sum\limits_{i=1,i\ne k}^K\left|\mathbf{h}_k^H(t)\mathbf{v}_i(t)\right|^2 + {\sigma^2}}
\end{equation}
whereas the achievable rate in bit/s/Hz is given by 
\begin{equation}\label{r_k}
R_k(t) = \log_2 \left(1 + {\rm{SINR}}_k(t)\right).
\end{equation}
While conventional systems have large disparity between peak and average rates, we aim at designing the system so as to guarantee a fixed rate $ r_k$ at UE $k$ at each time slot for $k=1,2,\ldots,K$.  Imposing $R_k(t) = r_k$ into \eqref{r_k} yields
\begin{equation}\label{SINR_k_constraint}
{\rm{SINR}}_k(t) = \frac{\left|\mathbf{h}_k^H(t)\mathbf{v}_k(t)\right|^2}{\sum\limits_{i=1,i\ne k}^K\left|\mathbf{h}_k^H(t)\mathbf{v}_i(t)\right|^2 + {\sigma^2}} =\gamma_k
\end{equation}
where $\gamma_k$ is the target SINR of UE $k$ obtained as $\gamma_k = 2^{r_k}-1$. The transmitted power $P(t) = \mathsf{E}_{\mathbf{s}}[||\mathbf{V}(t)\mathbf{s}(t)||^2]$ at time $t$ is given by
\begin{align}\label{P_T}
P(t)={\rm{tr}}\left(\mathbf{V}(t)\mathbf{V}^H(t)\right)
\end{align}
while the energy consumption $E_T$ within a given time interval $[0,T]$ is obtained as
\begin{align}\label{E_T}
E_T=\int_0^T {P(t)dt} = \int_0^T {\rm{tr}}\left(\mathbf{V}(t)\mathbf{V}^H(t)\right)d t.
\end{align}
Since $\mathbf{V}(t)$ depends on the realizations of $\mathbf{W}(t)=[\mathbf{w}_1(t),\mathbf{w}_2(t),\ldots,\mathbf{w}_K(t)] \in \mathbb{C}^{N\times K}$ as well as on the user positions $\{\mathbf{x}_{k}(t);k=1,2,\ldots,K\}$, the energy $E_T$ in \eqref{E_T} is clearly a random and time-dependent quantity. To characterize its statistics, we exploit the large system limit in which $K,N \rightarrow \infty$ with fixed ratio $c =K/N$. This will allow us to measure the energy consumption of the system when $K$ is relatively large but also to capture the temporal correlations induced on the energy by UE mobility and its corresponding asymptotic distribution. The asymptotic analysis will be conducted for most of the common linear precoding techniques. We begin with the precoder that minimizes the power $P(t)$ while satisfying the rate requirements. Then, we concentrate on the asymptotically design of MRT, ZF and RZF. 

\begin{remark}A known problem with using the asymptotic analysis is that the target rates are not guaranteed to be achieved when $N$ is finite and relatively small (see for example \cite{Asgharimoghaddam14}). This is because the approximation errors are translated into fluctuations in the resulting SINR values. However, these errors vanish rapidly also in the finite regime when $N$ is large enough, which is the regime envisioned for massive MIMO systems \cite{Rusek2013a}. 
\end{remark}

\section{Optimal Linear Precoding via Large System Analysis}\label{Optimal_precoding}
We assume that at each time slot $t$ the precoding matrix $\mathbf{V}(t)$ is designed as the solution of the following power minimization problem:
\begin{align}\label{Opt_problem}
\min_{\mathbf{V}(t)}& \quad  P(t) = \tr \left( \mathbf{V}(t)\mathbf{V}^H(t) \right) \\ \label{SINR_constraints}
{\text{subject to}} &  \quad {\rm{SINR}}_k(t) \ge \gamma_k \quad k=1,2,\ldots,K.
\end{align}
As shown in \cite{Wiesel06}, the above optimization problem is not convex but it can be put in a convex form by reformulating the SINR constraints as second-order cone constraints. The optimal linear precoder (OLP) $\mathbf{V}_\text{OLP}(t)$ is eventually found to be \cite{Wiesel06,Bjornson2012a}:
\begin{align}\label{22}
\mathbf{V}_{\text{OLP}}(t)=\left(\sum\limits_{i=1}^K\lambda_i^\star(t)\mathbf{  h}_i(t)\mathbf{  h}_i^H(t) + N\mathbf{I}_N\right)^{-1}\!\!\!\!\!\mathbf{  H}(t) \sqrt{\mathbf{P}^{\star}(t)}
\end{align}
where $\mathbf{H}(t)=[\mathbf{  h}_1(t),\mathbf{  h}_2(t),\ldots,\mathbf{  h}_K(t)] \in \mathbb{C}^{N\times K}$ and $\boldsymbol{\lambda}^\star(t)= [\lambda_1^\star(t),\lambda_2^\star(t),\ldots,\lambda_K^\star(t)]^T$ is the positive unique fixed point of the following equations for $k=1,2,\ldots,K$ \cite{Wiesel06,Bjornson:2014mag}:
\begin{align}\nonumber
\left(1 + \frac{1}{\gamma_k}\right)& \lambda_k^\star(t)= \\ &\!\!\!\!\!\frac{1}{\mathbf{h}_k^H(t)\left(\sum\limits_{i=1}^K\lambda_i^\star(t)\mathbf{  h}_i(t)\mathbf{  h}_i^H(t) + N\mathbf{I}_N\right)^{-1}\!\!\!\!\!\mathbf{h}_k(t)}.\label{23}
\end{align} 
Also, $\mathbf{P}^{\star}(t)=\diag\{p_1^\star(t),p_2^\star(t),\ldots,p_K^\star(t)\}$ is a diagonal matrix whose entries are such that the SINR constraints are all satisfied with equality when $\mathbf{V} (t)= \mathbf{V}_{\text{OLP}}(t)$. Plugging \eqref{SINR_k} into \eqref{SINR_constraints}, the optimal $\mathbf{p}^\star (t)= [p_1^\star(t),p_2^\star(t),\ldots,p_K^\star(t)]^T$ is obtained as 
\begin{align}\label{24}
\mathbf{p}^\star(t) = \sigma^2 \mathbf{D}^{-1}(t) \mathbf{1}_{K}
\end{align}
where the $(k,i)$th element of $\mathbf{D}(t) \in \mathbb C^{K\times K}$ is \cite{Bjornson:2014mag}
\begin{align}\label{25}
\left[\mathbf{D}(t)\right]_{k,i}= \begin{cases}
\frac{1}{\gamma_k}{|\vect{h}_k^H(t) \vect{c}_k^\star(t)|^2}& \text{for} \,\,\, k = i \\
-{|\vect{h}_k^H(t) \vect{c}_{i}^\star(t)|^2}& \text{for} \,\,\, k \ne i
\end{cases}
\end{align}
with $ \vect{c}_k^\star(t)$ being the $k$th column of 
\begin{align}\label{25.1}
 \vect{C}^\star(t) = \left(\sum\limits_{i=1}^K\lambda_i^\star(t)\mathbf{  h}_i(t)\mathbf{  h}_i^H(t) + N\mathbf{I}_N\right)^{-1}\mathbf{  H}(t).
\end{align}
As seen, $\mathbf{V}_{\text{OLP}}(t)$ in \eqref{22} is parameterized by $\boldsymbol{\lambda}^\star(t)$ and $\mathbf{p}^\star(t)$, where $\boldsymbol{\lambda}^\star(t)$ needs to be evaluated by an iterative procedure due to the fixed-point equations in \eqref{23}. When $K,N \rightarrow \infty$ with $K/N=c \in (0,1]$, some recent tools in large system analysis allow us to compute the so-called deterministic equivalents (see \cite{Hachem2007_DeterministicEquivCertainFunctionalsRandomMatrices,Couillet_Book} for more details on this subject) of $\boldsymbol{\lambda}^\star(t)$ and $\mathbf{p}^\star(t)$. For later convenience, we call
\begin{equation}\label{26}
\eta = 1 - \frac{c}{K}\sum\limits_{i=1}^K \frac{\gamma_i}{1+\gamma_i}
\end{equation}
and
\begin{equation}\label{26.10}
 A(t) =  \frac{1}{K} \sum\limits_{i=1}^K \frac{\gamma_i}{ l(\mathbf{x}_{i}(t))}.
\end{equation}
The following theorem provides the asymptotic value of the solution to \eqref{Opt_problem}.

\begin{theorem} \label{lemma:asymptotic-beamforming}
If $K,N \rightarrow \infty$ with $K/N=c \in (0,1]$, then
\begin{align}\label{27}
\mathop {\max }\limits_{k = 1,2,\ldots,K} \left| \lambda_k^\star(t)  - \overline \lambda_k(t)\right| & \mathop {\longrightarrow}\limits^{a.s.} 0\\ \label{28}
\mathop {\max }\limits_{k = 1,2,\ldots,K} \left|p_k^\star(t)  - \overline p_k(t) \right|&\mathop {\longrightarrow}\limits^{a.s.} 0
\end{align}
where $\overline \lambda_k(t)$ and $ \overline p_k(t)$ are the deterministic equivalents of $\lambda_k^\star(t)$ and $ p_k^\star(t)$ at time slot $t$, respectively, and are given by
\begin{equation}\label{29}
\overline \lambda_k(t) =  \frac{\gamma_k}{l(\mathbf{x}_{k}(t))\eta}
\end{equation}
\begin{equation}\label{30}
\overline p_k(t) = \frac{\gamma_k}{l(\mathbf{x}_{k}(t))\eta^2}\left(\overline P_{\rm{OLP}}(t) + \frac{\sigma^2}{l(\mathbf{x}_{k}(t))} \left(1+\gamma_k\right)^2\right)
\end{equation}
with
\begin{equation}\label{31}
\overline P_{\rm{OLP}}(t) = \frac{c\sigma^2}{\eta}A(t)
\end{equation}
being the deterministic equivalent of the transmit power $P(t)$ in \eqref{P_T}.
\end{theorem}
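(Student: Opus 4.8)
The plan is to route everything through the resolvent $\mathbf{Q}=\left(\sum_{i=1}^K\lambda_i^\star\mathbf{h}_i\mathbf{h}_i^H+N\mathbf{I}_N\right)^{-1}$, suppressing the time argument $t$ throughout. For \eqref{27}, I would first apply the Sherman--Morrison identity to strip $\mathbf{h}_k$ out of $\mathbf{Q}$, obtaining the exact relation $\mathbf{h}_k^H\mathbf{Q}\mathbf{h}_k=e_k/(1+\lambda_k^\star e_k)$ with $e_k=\mathbf{h}_k^H\mathbf{Q}_{[k]}\mathbf{h}_k$ and $\mathbf{Q}_{[k]}$ the resolvent with the $k$th dyad removed. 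Substituting this into the fixed-point equation \eqref{23} collapses it to $\lambda_k^\star=\gamma_k/e_k$, and the same two relations give the \emph{exact} per-user identity $\lambda_k^\star\,\mathbf{h}_k^H\mathbf{Q}\mathbf{h}_k=\gamma_k/(1+\gamma_k)$. Inserting this into the trace identity $\sum_i\lambda_i^\star\mathbf{h}_i^H\mathbf{Q}\mathbf{h}_i+N\tr(\mathbf{Q})=N$ (which is just $\tr(\mathbf{Q}\mathbf{Q}^{-1})=N$) gives, for every realization, $\tr(\mathbf{Q})=1-\tfrac{1}{N}\sum_i\tfrac{\gamma_i}{1+\gamma_i}=\eta$. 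The only genuinely asymptotic step is then to evaluate $e_k=l(\mathbf{x}_k)\,\mathbf{w}_k^H\mathbf{Q}_{[k]}\mathbf{w}_k$: since $\mathbf{w}_k$ is independent of $\mathbf{Q}_{[k]}$, the trace lemma together with a rank-one perturbation bound gives $e_k\to l(\mathbf{x}_k)\,\tr(\mathbf{Q})=l(\mathbf{x}_k)\eta$ almost surely and uniformly in $k$, whence $\lambda_k^\star\to\gamma_k/(l(\mathbf{x}_k)\eta)$, which is \eqref{29}.

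For \eqref{28} I would rewrite the SINR constraints \eqref{SINR_k_constraint}, with $\mathbf{v}_k=\sqrt{p_k^\star}\,\mathbf{Q}\mathbf{h}_k$, as the linear system $\mathbf{D}\mathbf{p}^\star=\sigma^2\mathbf{1}_K$ of \eqref{24}--\eqref{25}. The diagonal (signal) gain is handled by the first part through $\mathbf{h}_k^H\mathbf{Q}\mathbf{h}_k=e_k/(1+\gamma_k)\to l(\mathbf{x}_k)\eta/(1+\gamma_k)$. The off-diagonal (interference) contribution is the delicate term: gathering it as $\sum_{i\ne k}p_i^\star|\mathbf{h}_k^H\mathbf{Q}\mathbf{h}_i|^2=\mathbf{h}_k^H\mathbf{Q}\mathbf{B}_k\mathbf{Q}\mathbf{h}_k$ with $\mathbf{B}_k=\sum_{i\ne k}p_i^\star\mathbf{h}_i\mathbf{h}_i^H$, peeling off $\mathbf{h}_k$ once more and invoking the trace lemma on $\mathbf{Q}_{[k]}\mathbf{B}_k\mathbf{Q}_{[k]}$ reduces it to $\tfrac{l(\mathbf{x}_k)}{(1+\gamma_k)^2}\tr(\mathbf{Q}^2)\sum_i\tfrac{p_i^\star l(\mathbf{x}_i)}{(1+\gamma_i)^2}$, where I also use $\mathbf{h}_i^H\mathbf{Q}^2\mathbf{h}_i\to l(\mathbf{x}_i)\tr(\mathbf{Q}^2)/(1+\gamma_i)^2$. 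Solving the resulting (now per-user decoupled) equation for $p_k^\star$ produces the form \eqref{30}, with $\overline{P}_{\mathrm{OLP}}=\tr(\mathbf{Q}^2)\sum_i\tfrac{p_i^\star l(\mathbf{x}_i)}{(1+\gamma_i)^2}$, which I recognize as the deterministic equivalent of $P=\tr(\mathbf{V}\mathbf{V}^H)=\sum_k p_k^\star\,\mathbf{h}_k^H\mathbf{Q}^2\mathbf{h}_k$ in \eqref{P_T}.

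To obtain the closed form \eqref{31}, I would compute $\tr(\mathbf{Q}^2)$ by differentiating the fixed-point identity for $\tr(\mathbf{Q})$ with respect to the regularization level (using $d\mathbf{Q}/dz=-\mathbf{Q}^2$), which gives $\tr(\mathbf{Q}^2)=\eta^2/\big(N\eta+\sum_i\tfrac{\gamma_i}{(1+\gamma_i)^2}\big)$. Substituting the ansatz \eqref{30} into $S=\sum_i\tfrac{p_i^\star l(\mathbf{x}_i)}{(1+\gamma_i)^2}$ turns $\overline{P}_{\mathrm{OLP}}=\tr(\mathbf{Q}^2)\,S$ into a single scalar equation in $\overline{P}_{\mathrm{OLP}}$; the $\sum_i\tfrac{\gamma_i}{(1+\gamma_i)^2}$ terms cancel against the denominator of $\tr(\mathbf{Q}^2)$, leaving $\overline{P}_{\mathrm{OLP}}=\tfrac{c\sigma^2}{\eta}A(t)$ with $A(t)$ as in \eqref{26.10}. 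Back-substitution then delivers \eqref{30}.

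I expect the main obstacle to lie in the rigor rather than the algebra. The trace-lemma replacements must be shown to hold almost surely and \emph{uniformly} over $k$, which requires a priori bounds placing every $\lambda_k^\star$ in a fixed interval bounded away from $0$ and $\infty$ (exploiting that \eqref{23} is a standard interference function, so the fixed point is unique and monotone), plus Burkholder-type concentration for the quadratic forms and a union bound over the $K$ users. The interference term is the sharpest point: $\mathbf{h}_k^H\mathbf{Q}\mathbf{B}_k\mathbf{Q}\mathbf{h}_k$ is a second-order functional of the resolvent and needs a genuine deterministic equivalent for the $\tr(\mathbf{Q}^2)$-weighted bilinear forms, rather than a single pass of the trace lemma; controlling $\mathbf{Q}_{[k]}$ versus $\mathbf{Q}$ and the coupling through $\{p_i^\star\}$ simultaneously is where the estimates are most involved.
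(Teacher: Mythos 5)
Your exact algebra is correct and partly more elegant than the paper's: the Sherman--Morrison reduction $\lambda_k^\star = \gamma_k/e_k$ with $e_k=\mathbf{h}_k^H\mathbf{Q}_{[k]}\mathbf{h}_k$, the per-user identity $\lambda_k^\star\,\mathbf{h}_k^H\mathbf{Q}\mathbf{h}_k=\gamma_k/(1+\gamma_k)$, and the resulting \emph{exact} relation $\tr(\mathbf{Q})=\eta$ all check out, and your closed-form derivation of $\overline P_{\rm OLP}$ from the self-consistency equation does reproduce \eqref{31}. However, there is a genuine gap at the one step you label as "the only genuinely asymptotic step": the claim that $\mathbf{w}_k$ is independent of $\mathbf{Q}_{[k]}$ is false. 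Removing the $k$th dyad from the resolvent does not remove the dependence on $\mathbf{h}_k$, because the remaining coefficients $\{\lambda_i^\star\}_{i\neq k}$ are the solution of the coupled fixed-point system \eqref{23} and therefore depend on \emph{all} channel vectors, including $\mathbf{h}_k$. The standard trace lemma, which is what you invoke to get $e_k\to l(\mathbf{x}_k)\tr(\mathbf{Q})$, requires exactly this independence; your closing remarks about a priori bounds, uniqueness of the fixed point via interference-function monotonicity, and Burkholder concentration address uniformity over $k$ but not this dependence. This is precisely the obstruction the paper singles out ("the application of these tools to the problem at hand is not analytically correct since the Lagrange multipliers in \eqref{23} are a function of the channel vectors"), so the proposal reproduces the pitfall the theorem's proof is designed to avoid.

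The paper's argument (following \cite{Romain2014}) circumvents the dependence without any perturbation analysis: setting $d_k=\gamma_k/(\lambda_k^\star l(\mathbf{x}_k))$ and ordering $d_1\le\cdots\le d_K$, it uses monotonicity of the resolvent to replace the random coefficients $\gamma_i/d_i$ by the common extreme value $\gamma_i/d_K$ (resp.\ $\gamma_i/d_1$), and then, under the hypothesis $d_K>\eta+\ell$ infinitely often, by the \emph{deterministic} constants $\gamma_i$ with ridge $\eta+\ell$. Only at that point is a deterministic equivalent invoked, now legitimately, since $\mathbf{w}_K$ is independent of $\frac{1}{N}\sum_{i<K}\gamma_i\mathbf{w}_i\mathbf{w}_i^H$; the contradiction with $e(0)=1$ and strict monotonicity of $e(\ell)$ then sandwiches all $d_k$ into $[\eta-\ell,\eta+\ell]$. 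To repair your proof you would either have to adopt this ordering-and-contradiction scheme, or rigorously justify an exchangeability/stability argument showing that replacing $\{\lambda_i^\star\}_{i\ne k}$ by the multipliers of the $(K-1)$-user system (which \emph{is} independent of $\mathbf{w}_k$) perturbs $\mathbf{Q}_{[k]}$ negligibly and uniformly in $k$ --- a nontrivial additional step that the proposal does not supply. The same caveat applies to your treatment of the interference term $\mathbf{h}_k^H\mathbf{Q}\mathbf{B}_k\mathbf{Q}\mathbf{h}_k$, where the weights $p_i^\star$ carry the same global dependence; the paper sidesteps this by citing standard results for the power part once \eqref{27} is established.
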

\begin{IEEEproof}
Similar results have previously been derived by applying standard random matrix theory tools to the right-hand-side of \eqref{23}. {However, the application of these tools to the problem at hand is not analytically correct since the Lagrange multipliers in \eqref{23} are a function of the channel vectors $\{{\bf{h}}_i(t)\}$. To overcome this issue, we make use of the same approach adopted in \cite{Romain2014} whose main steps are sketched in Appendix A for completeness.} On the other hand,  \eqref{30} is proved using standard random matrix theory results and is omitted for space limitations.
\end{IEEEproof}

Observe that the Lagrange multiplier $\lambda_k(t)$ is known to act as a user priority parameter that implicitly determines how much interference the other UEs may cause to UE $k$ \cite{Bjornson:2014mag}. Interestingly, its asymptotic value $\overline \lambda_k(t)$ in \eqref{29} turns out to be proportional to the target SINR $\gamma_k$ and inversely proportional to $l(\mathbf{x}_{k}(t))$ such that users with weak channels have larger values. {This means that in the asymptotic regime higher priority is given to users that require high performance (i.e., $\gamma_k > 1$) and/or have weak average propagation conditions (i.e., $l(\mathbf{x}_{k}(t)) < 1$).} 

The following corollary can be easily obtained from Theorem \ref{lemma:asymptotic-beamforming} and will be useful later on.
\begin{corollary}[\!\!\cite{Zakhour2012}]\label{corollary_OPL_same_rates}
If the same target SINR is imposed for each user, i.e., 
\begin{equation}\label{gamma}
\boldsymbol{\gamma} = \gamma \boldsymbol{1}_k
\end{equation}
then $\overline \lambda_k(t)$ in \eqref{29} reduces to
\begin{equation}\label{26.2}
\overline \lambda_k(t) = \frac{\gamma}{l(\mathbf{x}_{k}(t))}\left( 1 - c \frac{\gamma}{1+\gamma}\right)^{-1}
\end{equation}
and $\overline P_{\rm{OLP}}(t)$ becomes
\begin{equation}\label{26.3}
\overline P_{\rm{OLP}}(t)  = {c\sigma^2}\left( 1 - c\frac{\gamma}{1+\gamma}\right)^{-1}A(t).
\end{equation}
\end{corollary}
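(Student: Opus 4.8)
The plan is to obtain Corollary~\ref{corollary_OPL_same_rates} as a pure algebraic specialization of Theorem~\ref{lemma:asymptotic-beamforming}, with no new limiting argument or random-matrix machinery required beyond what the theorem already provides. The only quantity in \eqref{29} and \eqref{31} that mixes the different target SINRs is the scalar $\eta$ defined in \eqref{26}, so the entire proof reduces to evaluating $\eta$ under the uniform-rate assumption \eqref{gamma} and substituting back.

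First I would set $\gamma_k = \gamma$ for all $k$ in \eqref{26}. Since the summand $\gamma_i/(1+\gamma_i)$ then no longer depends on the index $i$, the average collapses, $\frac{c}{K}\sum_{i=1}^K \frac{\gamma}{1+\gamma} = c\frac{\gamma}{1+\gamma}$, which gives $\eta = 1 - c\frac{\gamma}{1+\gamma}$. Next I would insert this value of $\eta$ directly into \eqref{29}: replacing $\gamma_k$ by $\gamma$ yields $\overline{\lambda}_k(t) = \frac{\gamma}{l(\mathbf{x}_k(t))\,\eta}$, and substituting $\eta = 1 - c\frac{\gamma}{1+\gamma}$ reproduces \eqref{26.2}. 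The same substitution in \eqref{31}, leaving $A(t)$ in its defining form \eqref{26.10}, produces $\overline{P}_{\rm{OLP}}(t) = \frac{c\sigma^2}{\eta}A(t) = c\sigma^2\bigl(1 - c\frac{\gamma}{1+\gamma}\bigr)^{-1}A(t)$, which is exactly \eqref{26.3}.

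I do not expect any genuine obstacle here, since the result is a one-line specialization once $\eta$ has been simplified. The only point worth flagging is that $A(t)$ is deliberately kept in its general form in \eqref{26.3} rather than expanded; with equal SINRs it further reduces to $A(t) = \frac{\gamma}{K}\sum_{i=1}^K \frac{1}{l(\mathbf{x}_i(t))}$, but retaining the symbol $A(t)$ preserves the structural parallel with \eqref{31} and matches the statement as written.
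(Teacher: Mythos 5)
Your proposal is correct and matches the paper's (implicit) argument: the paper offers no separate proof, noting only that the corollary follows easily from Theorem~\ref{lemma:asymptotic-beamforming}, and the intended derivation is exactly your substitution $\gamma_k=\gamma$ into \eqref{26} to get $\eta = 1 - c\gamma/(1+\gamma)$, followed by insertion into \eqref{29} and \eqref{31}. Nothing is missing.
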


\section{Heuristic Linear Precoding via Large System Analysis}\label{Heuristic_precoding}

Inspired by the optimal linear precoding in \eqref{22}, we also consider suboptimal precoding techniques that build on heuristics \cite{Bjornson:2014mag}.
To this end, we let $\mathbf{V}(t)$ take the generic form:
\begin{align}\label{22.101}
\!\!\mathbf{V}(t)&=\left(\sum\limits_{i=1}^K\alpha_i(t)\mathbf{  h}_i(t)\mathbf{  h}_i^H(t) + N\rho \mathbf{I}_N\right)^{-1}\!\!\!\!\!\mathbf{  H}(t)\sqrt{\mathbf{  P}(t)}
\end{align}
where $\boldsymbol{\alpha}(t)= [\alpha_1(t),\alpha_2(t),\ldots,\alpha_K(t)]^T$ is now a given vector with positive design parameters and $\rho > 0$ is another design parameter. Note that \eqref{22.101} is basically obtained from \eqref{22} by setting $\lambda_k(t) = {\alpha_k(t)}/{\rho}$ for all $k$.  As before, the power allocation matrix $\mathbf{P}(t)$ is computed so as to satisfy all the SINR constraints with equality in the asymptotic regime.


If $\mathbf{V}(t)$ takes the generic heuristic form in \eqref{22.101}, then for any given $\boldsymbol{\alpha}(t)$ and $\rho$ the results of Theorem 1 in \cite{Wagner12} lead to the following corollary.
\begin{corollary}[\!\!\cite{Wagner12}]\label{corollary_wagner}
If $\mathbf{V}(t)$ is defined as in \eqref{22.101} and $K,N \rightarrow \infty$ with $c \in (0,1]$, then 
\begin{align}\label{19}
 P(t)  - \overline P(t) \mathop {\longrightarrow}\limits^{a.s.} 0 \\ \label{20}
 {\rm{SINR}}_k(t)  - {\rm{\overline {SINR}}}_k(t) \mathop {\longrightarrow}\limits^{a.s.} 0
\end{align}
where $\overline P(t)$ and $ {\rm{\overline {SINR}}}_k(t)$ are given by
\begin{align}\label{A16}
&\overline P(t)= \frac{c \mu^\prime}{K}\sum\limits_{i=1}^K\frac{p_i(t)l(\mathbf{x}_{i}(t))}{\left(1 + \alpha_i(t)l(\mathbf{x}_{i}(t))\mu\right)^2}\\ \label{A17}
&{\rm{\overline {SINR}}}_k(t)= \frac{p_k(t)l(\mathbf{x}_{k}(t))\mu^2}{\overline P(t) + \frac{\sigma^2}{l(\mathbf{x}_{k}(t))} \left(1+\alpha_k(t) l(\mathbf{x}_{k}(t))\mu\right)^2}
\end{align}
and $\mu$ is the solution to the fixed point equation
\begin{align}\label{mu}
\mu= \left(\frac{1}{N}\sum\limits_{i=1}^K\frac{ \alpha_i(t)l(\mathbf{x}_{i}(t))}{1+\alpha_i(t) l(\mathbf{x}_{i}(t)) \mu} + \rho\right)^{-1}
\end{align}
with $\mu^\prime$ in \eqref{A16} being its derivative with respect to $\rho$.
\end{corollary}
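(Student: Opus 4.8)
The plan is to recognize the heuristic precoder in \eqref{22.101} as an instance of the regularized zero-forcing family analyzed in \cite{Wagner12} and to specialize the deterministic equivalents proved there to the isotropic channel model \eqref{h_k}. First I would absorb the per-user design weights into the channels by introducing the scaled vectors $\tilde{\mathbf{h}}_i(t) = \sqrt{\alpha_i(t)}\,\mathbf{h}_i(t)$, so that the matrix to be inverted in \eqref{22.101} becomes $\tilde{\mathbf{H}}(t)\tilde{\mathbf{H}}^H(t) + N\rho\mathbf{I}_N$ with $\tilde{\mathbf{h}}_i(t)$ distributed as $\mathcal{CN}(\mathbf{0},\alpha_i(t)\,l(\mathbf{x}_{i}(t))\mathbf{I}_N)$. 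This is exactly the separable (Kronecker) correlation model of \cite{Wagner12} with per-user correlation matrices $\mathbf{R}_i(t) = \alpha_i(t)\,l(\mathbf{x}_{i}(t))\mathbf{I}_N$, and the quantities entering both $P(t)=\tr(\mathbf{V}(t)\mathbf{V}^H(t))$ and ${\rm SINR}_k(t)$ in \eqref{SINR_k} are the resolvent bilinear forms $\mathbf{h}_k^H\mathbf{Q}\mathbf{h}_k$, $\mathbf{h}_k^H\mathbf{Q}\mathbf{h}_i$ and $\mathbf{h}_k^H\mathbf{Q}^2\mathbf{h}_k$, where $\mathbf{Q}=(\tilde{\mathbf{H}}(t)\tilde{\mathbf{H}}^H(t) + N\rho\mathbf{I}_N)^{-1}$. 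Theorem~1 of \cite{Wagner12} provides almost-sure deterministic equivalents for each of these forms, which immediately yields the convergence statements \eqref{19} and \eqref{20} once the equivalents are assembled.

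The key simplification I would then exploit is the isotropy of the model: since every $\mathbf{R}_i(t)$ is a scalar multiple of $\mathbf{I}_N$, the matrix-valued system of fixed-point equations in \cite{Wagner12} collapses to a single scalar equation. Concretely, the common Stieltjes-type fixed point reduces to $\mu$ defined in \eqref{mu}, and each per-user deterministic equivalent degenerates to $\alpha_i(t)\,l(\mathbf{x}_{i}(t))\mu$; this is the origin of the recurring factor $1+\alpha_i(t)\,l(\mathbf{x}_{i}(t))\mu$ in the denominators of \eqref{A16} and \eqref{A17}. Substituting these collapsed quantities into the general SINR expression of \cite{Wagner12} directly gives \eqref{A17}, with signal term $p_k(t)l(\mathbf{x}_{k}(t))\mu^2$ and the interference-plus-noise term in the stated form.

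The remaining ingredient is the power deterministic equivalent in \eqref{A16}. Because $P(t)=\tr(\mathbf{Q}\mathbf{H}(t)\mathbf{P}(t)\mathbf{H}^H(t)\mathbf{Q})$ involves the squared resolvent $\mathbf{Q}^2$, its deterministic equivalent carries the derivative of the scalar fixed point with respect to the regularization, which is precisely the factor $\mu^\prime$ appearing in \eqref{A16}; I would obtain $\mu^\prime$ in closed form by implicitly differentiating \eqref{mu} in $\rho$. The main obstacle I anticipate is bookkeeping the asymmetry introduced by the weights: the regularizing term uses the scaled channels $\tilde{\mathbf{H}}(t)$ whereas the signal factor $\mathbf{H}(t)\sqrt{\mathbf{P}(t)}$ uses the unscaled $\mathbf{H}(t)$, so the relevant quadratic form in the power is $\mathbf{h}_k^H\mathbf{Q}^2\mathbf{h}_k$ rather than $\tilde{\mathbf{h}}_k^H\mathbf{Q}^2\tilde{\mathbf{h}}_k$, and this asymmetry must be tracked carefully to recover the exact weighting $p_i(t)l(\mathbf{x}_{i}(t))/(1+\alpha_i(t)l(\mathbf{x}_{i}(t))\mu)^2$ inside the sum. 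Justifying the scalar collapse and the uniformity of the convergence over $k$ additionally relies on $l(\cdot)$ being bounded away from zero, as assumed in the model.
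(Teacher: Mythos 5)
Your proposal is correct and follows essentially the same route as the paper, which states this result as a direct corollary of Theorem 1 in \cite{Wagner12}: one specializes the deterministic equivalents there to per-user correlation matrices $\mathbf{R}_i(t)=\alpha_i(t)l(\mathbf{x}_i(t))\mathbf{I}_N$, so the matrix fixed-point system collapses to the scalar equation \eqref{mu} and the squared-resolvent term in the power produces the factor $\mu^\prime$. Your attention to the asymmetry between the weighted channels in the regularizer and the unweighted ones in $\mathbf{H}(t)\sqrt{\mathbf{P}(t)}$ is exactly the bookkeeping needed to recover \eqref{A16}--\eqref{A17}.
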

The above corollary is now easily used to compute the asymptotic power $\overline P(t)$ required to satisfy ${\rm{\overline {SINR}}}_k(t)= \gamma_k$ for $k=1,2,\ldots,K$. Setting ${\rm{\overline {SINR}}}_k(t)$ in \eqref{A17} equal to $\gamma_k$ yields
\begin{align}\label{A19}
\overline p_k(t)& = \frac{\gamma_k}{l(\mathbf{x}_{k}(t))\mu^2} {{\left(\overline P(t) + \frac{\sigma^2}{l(\mathbf{x}_{k}(t))}\left(1+\alpha_k(t)l(\mathbf{x}_{k}(t))\mu\right)^2\right)}}.
\end{align}
Plugging \eqref{A19} into \eqref{A16} and solving with respect to $\overline P(t)$ leads to the following result.
\begin{lemma}\label{lemma2}
If ${\rm{\overline {SINR}}}_k(t)= \gamma_k$ for $k=1,2,\ldots,K$, then $\overline p_k(t)$ takes the form in \eqref{A19}
and
\begin{align}\label{12}
\overline P(t)  =  \frac{ c  \sigma^2 A(t)}{1 - \mu^2F(t)-c B(t)}
\end{align}
with $A(t)$ and $\mu$ being given by \eqref{26.10} and \eqref{mu}, respectively, and
\begin{align}\label{102}
 B(t) &= \frac{1}{K}\sum\limits_{i=1}^K \frac{ \gamma_i}{\left(1 + \alpha_i(t)l(\mathbf{x}_{i}(t))\mu\right)^2} \\ \label{102} 
F(t)&= \frac{1}{N}\sum\limits_{i=1}^{K}\frac{(\alpha_i(t)l(\mathbf{x}_{i}(t)))^2}{\left(1+\alpha_i(t)l(\mathbf{x}_{i}(t))\mu\right)^2}.
\end{align}
\end{lemma}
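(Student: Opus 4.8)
The plan is to treat Lemma~\ref{lemma2} as a purely algebraic consequence of Corollary~\ref{corollary_wagner}: the deterministic equivalents \eqref{A16}, \eqref{A17} and the fixed-point equation \eqref{mu} are already in hand, so no further random-matrix machinery is needed and the task reduces to eliminating the per-user powers and closing the resulting scalar equation for $\overline P(t)$. Throughout I write $l_i=l(\mathbf{x}_{i}(t))$ and $\alpha_i=\alpha_i(t)$ for brevity. I would first recover \eqref{A19} by imposing ${\rm{\overline{SINR}}}_k(t)=\gamma_k$ in \eqref{A17} and solving the resulting linear equation for $\overline p_k(t)$; since $\overline p_k(t)$ enters only through the numerator of \eqref{A17}, this step is immediate.

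Next I would substitute \eqref{A19}, written with running index $i$, into the sum \eqref{A16}. Using
\begin{align}\label{plan-split}
\frac{\overline p_i(t)\, l_i}{\left(1+\alpha_i l_i\mu\right)^2}
= \frac{\gamma_i}{\mu^2}\left(\frac{\overline P(t)}{\left(1+\alpha_i l_i\mu\right)^2}+\frac{\sigma^2}{l_i}\right),
\end{align}
the $i$th summand splits into a part proportional to $\overline P(t)$ and a part independent of it. Summing over $i$ and recognising the averages $A(t)$ in \eqref{26.10} together with $B(t)$ and $F(t)$ in \eqref{102} collapses \eqref{A16} to the single scalar relation $\overline P(t)=\frac{c\mu'}{\mu^2}\bigl(\overline P(t)\,B(t)+\sigma^2 A(t)\bigr)$, which is linear in $\overline P(t)$.

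The one genuinely non-mechanical step, and the main obstacle, is to rewrite the prefactor $c\mu'/\mu^2$ as the quantity $c/(1-\mu^2 F(t))$ appearing in \eqref{12}. For this I would differentiate the fixed-point equation \eqref{mu} with respect to $\rho$. Writing \eqref{mu} as $\mu\bigl(S(\mu)+\rho\bigr)=1$ with $S(\mu)=\frac1N\sum_i \frac{\alpha_i l_i}{1+\alpha_i l_i\mu}$, a short computation gives $S'(\mu)=-F(t)$; implicit differentiation, combined with $S(\mu)+\rho=1/\mu$, then yields the identity $\mu'\bigl(1-\mu^2 F(t)\bigr)=\mu^2$, that is $c\mu'/\mu^2=c/(1-\mu^2 F(t))$. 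The only delicate point is the sign convention inherited from \cite{Wagner12}: the derivative $\mu'$ must be the one that keeps $\overline P(t)$ positive (equivalently, the derivative in the resolvent variable $-\rho$), so that $F(t)$ enters \eqref{12} with the displayed sign; I would state this convention explicitly to avoid a spurious minus sign.

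Finally, inserting this identity and solving the scalar linear equation $\overline P(t)\bigl(1-\mu^2 F(t)\bigr)=c\,\overline P(t)B(t)+c\sigma^2 A(t)$ for $\overline P(t)$ produces $\overline P(t)=\frac{c\sigma^2 A(t)}{1-\mu^2 F(t)-cB(t)}$, which is exactly \eqref{12}; positivity and well-posedness then reduce simply to the denominator being positive in the operating regime.
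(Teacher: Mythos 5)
Your proposal is correct and follows exactly the route the paper takes (the paper's proof is the single sentence ``Plugging \eqref{A19} into \eqref{A16} and solving with respect to $\overline P(t)$''), with the added benefit that you spell out the key identity $\mu'\bigl(1-\mu^2F(t)\bigr)=\mu^2$ obtained by differentiating \eqref{mu}. Your remark on the sign convention for $\mu'$ is well taken: as literally stated ($\mu'$ the derivative with respect to $\rho$) the prefactor would be negative, so $\mu'$ must indeed be understood as the derivative in $-\rho$ for \eqref{12} to come out with the displayed signs.
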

The above results are next used to characterize the asymptotic structure of MRT, ZF and RZF.

\subsection{Maximum Ratio Transmission}

Setting $\boldsymbol{\alpha}(t)= \boldsymbol{0}_K$ and $\rho= 1$ into \eqref{22.101} leads to 
\begin{align}\label{28.10}
\mathbf{V}_{\rm{MRT}}(t)=\frac{1}{N}\mathbf{  H}(t)\sqrt{\mathbf{  P}(t)}
\end{align}
which is the well known MRT precoder. The following corollary is obtained from Lemma \ref{lemma2}.
\begin{corollary}\label{corollary_MRT}
If MRT is used, then $\overline p_k(t)$ reduces to $\overline p_k(t) =  \frac{\gamma_k}{l(\mathbf{x}_{k}(t))}{{\left(\overline P_{\rm{MRT}}(t) + \frac{\sigma^2}{l(\mathbf{x}_{k}(t))}\right)}}{}$
where
\begin{align}
\overline P_{\rm{MRT}}(t)= \frac{c\sigma^2}{1 - c\underline{\gamma}} A(t)
\end{align}
and $\underline{\gamma}$ is the average of target SINRs:
\begin{align}\label{34}
\underline{\gamma} = \frac{1}{K}\sum\limits_{i=1}^K\gamma_i.
\end{align}

\end{corollary}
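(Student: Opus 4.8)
The plan is to obtain Corollary~\ref{corollary_MRT} as a direct specialization of Lemma~\ref{lemma2} to the MRT parameter choice, so the whole proof reduces to substitution into results already established. Recall from \eqref{28.10} that MRT corresponds to setting $\boldsymbol{\alpha}(t)=\boldsymbol{0}_K$ and $\rho=1$ in the generic heuristic precoder \eqref{22.101}; indeed, with these values $(\sum_i\alpha_i(t)\mathbf{h}_i(t)\mathbf{h}_i^H(t)+N\rho\mathbf{I}_N)^{-1}=(N\mathbf{I}_N)^{-1}$, which recovers \eqref{28.10} exactly. My first step is therefore to feed $\boldsymbol{\alpha}(t)=\boldsymbol{0}_K$ and $\rho=1$ into every quantity appearing in Lemma~\ref{lemma2} and simplify.

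First I would solve the fixed-point equation \eqref{mu} for $\mu$. With $\alpha_i(t)=0$ for all $i$ the sum on the right-hand side vanishes identically, so the equation collapses to $\mu=(0+\rho)^{-1}=1/\rho$, and with $\rho=1$ this gives the closed-form value $\mu=1$ without any iteration. This is the key simplification: the nonlinearity that normally makes \eqref{mu} implicit disappears entirely for MRT. Next I would evaluate the auxiliary quantities $F(t)$ and $B(t)$ in \eqref{102}: setting $\alpha_i(t)=0$ makes every numerator of $F(t)$ vanish, so $F(t)=0$, while each denominator in $B(t)$ reduces to $1$, leaving $B(t)=\frac{1}{K}\sum_{i=1}^K\gamma_i=\underline{\gamma}$, the average target SINR of \eqref{34}.

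I would then substitute $\mu=1$, $F(t)=0$ and $B(t)=\underline{\gamma}$ into the power expression \eqref{12}, which immediately yields $\overline P(t)=c\sigma^2 A(t)/(1-c\underline{\gamma})$, i.e. exactly $\overline P_{\rm{MRT}}(t)$. Finally I would specialize the per-user power \eqref{A19}: with $\alpha_k(t)=0$ the factor $(1+\alpha_k(t)l(\mathbf{x}_{k}(t))\mu)^2$ becomes $1$, and with $\mu=1$ the prefactor $1/\mu^2$ also becomes $1$, so \eqref{A19} collapses to $\overline p_k(t)=\frac{\gamma_k}{l(\mathbf{x}_{k}(t))}(\overline P_{\rm{MRT}}(t)+\sigma^2/l(\mathbf{x}_{k}(t)))$, as claimed.

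Since every step is an algebraic substitution into formulas already proved in Lemma~\ref{lemma2} (themselves resting on Corollary~\ref{corollary_wagner}), there is no genuine analytical obstacle here; the substantive work was done upstream. The only points worth checking carefully are that $\mu=1$ is the legitimate (positive) solution of \eqref{mu} rather than a spurious root, and that the derivative $\mu'$ appearing in the intermediate identity \eqref{A16} evaluates consistently for MRT — it does, since with $F(t)=0$ and $\mu=1$ one finds $\mu'=-\mu^2/(1-\mu^2F(t))=-1$, which is precisely the value implicitly used when passing from \eqref{A16}--\eqref{A19} to the already-simplified form \eqref{12} that I invoke directly.
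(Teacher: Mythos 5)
Your proof is correct and follows exactly the route the paper intends: the paper states the corollary ``is obtained from Lemma~\ref{lemma2}'' without further detail, and your substitution of $\boldsymbol{\alpha}(t)=\boldsymbol{0}_K$, $\rho=1$ into \eqref{mu}, \eqref{102}, \eqref{12} and \eqref{A19} (giving $\mu=1$, $F(t)=0$, $B(t)=\underline{\gamma}$) is precisely that computation. The only blemish is the closing aside on $\mu'$: the value consistent with passing from \eqref{A16}--\eqref{A19} to \eqref{12} is $\mu'=+\mu^2/(1-\mu^2F(t))=+1$ rather than $-1$ (the paper's $\mu'$ must carry a sign convention making it positive, since \eqref{A16} requires $\mu'>0$ to yield a positive power), but this does not affect your main argument, which correctly bypasses $\mu'$ by invoking \eqref{12} directly.
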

Since $\overline P_{\rm{MRT}}(t)$ must be positive and finite, it follows that the asymptotic analysis can be applied to MRT only when the following condition is satisfied:
$1 - c\underline{\gamma} > 0$. If $\boldsymbol{\gamma} = \gamma \boldsymbol{1}_k$, this implies $\gamma < 1/c$ or, equivalently, $
r < \log_2\left(1 + \frac{1}{c}\right)$.
If the above condition is met, then $\overline P_{\rm{MRT}}(t)$ is positive and finite. On the other hand, $\overline P_{\rm{MRT}}(t)$ diverges to infinity when $r =\log_2\left(1 + \frac{1}{c}\right)$.
\subsection{Zero Forcing}
Setting $\boldsymbol{\alpha}(t)= \boldsymbol{1}_K$ into \eqref{22} yields $\mathbf{V}(t)= \left(\mathbf{  H}(t) \mathbf{  H}^H(t) + N\rho \mathbf{I}_N\right)^{-1}\mathbf{  H}(t)\sqrt{\mathbf{  P}(t)}$
from which using the Woodbury matrix identity and imposing $\rho= 0$ the ZF precoder is obtained 
\begin{align}\label{ZF.1}
\mathbf{V}_{\rm{ZF}}(t)= \mathbf{  H}(t)\left(\mathbf{  H}^H(t)\mathbf{  H}(t)\right)^{-1}\sqrt{\mathbf{  P}(t)}.
\end{align}
The following corollary can be obtained from Lemma \ref{lemma2}.
\begin{corollary}\label{corollary_ZF}
If ZF is used, then $\overline p_k(t)$ reduces to $\overline p_k(t) =  \gamma_k \sigma^2$
and
\begin{align}\label{ZF.3}
\overline P_{\rm{ZF}}(t)= \frac{c\sigma^2}{1 - c}A(t).
\end{align}
\end{corollary}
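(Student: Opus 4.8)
The plan is to specialize Lemma~\ref{lemma2} to the ZF parameters $\boldsymbol{\alpha}(t)=\boldsymbol{1}_K$ and $\rho=0$, treating $\rho=0$ as the limit $\rho\to 0^+$ because the fixed-point equation \eqref{mu} degenerates at $\rho=0$. Writing $l_i=l(\mathbf{x}_i(t))$ for brevity, all of $\overline P(t)$, $\overline p_k(t)$, $B(t)$ and $F(t)$ in Lemma~\ref{lemma2} are continuous functions of $\mu$, so the entire computation reduces to tracking the behaviour of $\mu$ as $\rho\to 0^+$ and then evaluating the relevant limits.

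First I would analyze \eqref{mu} with $\alpha_i=1$, namely $\mu=\big(\tfrac{1}{N}\sum_i \tfrac{l_i}{1+l_i\mu}+\rho\big)^{-1}$. Since $l(\cdot)$ is bounded below by some $l_{\min}>0$, each summand stays controlled and the only way the right-hand side can remain consistent as $\rho\to 0^+$ is for $\mu\to\infty$. Using $\tfrac{l_i}{1+l_i\mu}\to\tfrac{1}{\mu}$ uniformly in $i$, the fixed-point relation gives to leading order $\mu\big(\tfrac{c}{\mu}+\rho\big)=1$, i.e. $\mu\sim(1-c)/\rho\to\infty$; however, for the two formulas claimed in the corollary only the fact that $\mu\to\infty$, together with the combination $\mu^2F(t)\to c$, is actually needed.

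Next I would evaluate the limits of the auxiliary quantities of Lemma~\ref{lemma2}. For $B(t)$, each term $\gamma_i/(1+l_i\mu)^2\to 0$, so $B(t)\to 0$. For $F(t)$, rewriting $\mu^2F(t)=\tfrac{1}{N}\sum_i\big(1-\tfrac{1}{1+l_i\mu}\big)^2\to\tfrac{K}{N}=c$, where the bounded-below assumption on $l(\cdot)$ again ensures uniform convergence of each factor to $1$. Substituting $\mu^2F(t)\to c$ and $B(t)\to 0$ into \eqref{12} collapses the denominator to $1-c$ and yields $\overline P_{\rm{ZF}}(t)=\tfrac{c\sigma^2}{1-c}A(t)$, which is exactly \eqref{ZF.3}. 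For the per-user powers I would insert $\alpha_k=1$ into \eqref{A19}: the summand proportional to $\overline P(t)$ carries a factor $1/\mu^2\to 0$, while in the remaining term $(1+l_k\mu)^2\sim l_k^2\mu^2$ cancels both the $1/\mu^2$ and the two powers of $l_k$, leaving $\overline p_k(t)\to\gamma_k\sigma^2$.

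The main obstacle is the careful treatment of the $\rho\to 0^+$ (equivalently $\mu\to\infty$) limit: one must confirm that the competing divergences $\mu^2F(t)$ and $1/\mu^2$ combine into finite limits, and that the subleading corrections hidden in the fixed-point equation do not contaminate these limits. The hypothesis that $l(\cdot)$ is bounded away from zero (stated after \eqref{h_k}) is precisely what makes all these limits uniform in $i$ and $k$; without it the quantity $1/(1+l_i\mu)$ need not vanish and the cancellations would fail. A minor consistency check is that the $\rho\to 0$ limit of \eqref{22.101} with $\boldsymbol{\alpha}(t)=\boldsymbol{1}_K$ does reproduce the ZF precoder \eqref{ZF.1} through the Woodbury identity, so that evaluating Lemma~\ref{lemma2} at these parameter values is legitimate.
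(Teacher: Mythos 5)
Your proposal is correct and follows essentially the same route as the paper: both specialize Lemma~\ref{lemma2} to $\boldsymbol{\alpha}(t)=\boldsymbol{1}_K$, deduce from the fixed-point equation that $\mu\to\infty$ as $(1-c)/\rho$ when $\rho\to 0^+$, and then evaluate the limits $B(t)\to 0$, $\mu^2F(t)\to c$ and $\overline p_k(t)\to\gamma_k\sigma^2$ before substituting into \eqref{12}. The only cosmetic difference is that the paper first rewrites the fixed-point relation as $\rho\mu=1-c+\frac{1}{N}\sum_i(1+l(\mathbf{x}_i(t))\mu)^{-1}$ and passes to the limit in $\varphi=\rho\mu$, whereas you extract the same asymptotics directly.
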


\begin{proof}
Setting $\boldsymbol{\alpha}= \boldsymbol{1}_K$ into \eqref{mu} and using simple mathematical arguments yields
\begin{align}\label{mu.1}
\rho\mu = 1 - c + \frac{1}{N}\sum\limits_{i=1}^K\frac{ 1}{1+l(\mathbf{x}_{i}(t)) \mu}.
\end{align}
Letting $\varphi = \rho\mu$ one gets $\varphi = 1 - c + \frac{1}{N}\sum\nolimits_{i=1}^K\frac{ 1}{1+l(\mathbf{x}_{i}(t)) \varphi/\rho}$ from which it follows that $\varphi \to 1- c$ when $\rho \to 0$. This means that $\mu$ goes to infinity as $\frac{1-c}{\rho}$ when $\rho \to 0$. Therefore, from \eqref{A19} one easily gets $\overline p_k(t) -  \gamma_k \sigma^2 \to 0$ whereas \eqref{102}  leads to $B(t) \to 0$ and $\mu^{2}F(t) \to c$. Plugging these results into \eqref{12} yields \eqref{ZF.3}.
\end{proof}

\subsection{Regularized Zero Forcing}
Assume that ${\alpha}_i(t) = 1/l(\mathbf{x}_{i}(t))$ for any $i$, then the processing matrix $\mathbf{V}(t)$ in \eqref{22} reduces to (see also \cite{Evans2013})
\begin{align}\label{31}
\!\!\mathbf{V}_{\rm{RZF}}(t)&=\left(\sum\limits_{i=1}^K\mathbf{  w}_i(t)\mathbf{  w}_i^H(t) + N\rho\mathbf{I}_N\right)^{-1}\!\!\!\!\!\mathbf{  H}(t)\sqrt{\mathbf{  P}(t)}
\end{align}
which is referred to as RZF precoder in the sequel. Differently from OLP that requires to compute the fixed point of a set of equations, the optimization of RZF requires only to look for the value of $\rho$. This can not generally be done in closed-form but requires a numerical optimization procedure \cite{Bjornson:2014mag}. If the asymptotic regime is analyzed, then the following result is obtained.
\begin{lemma}
If a RZF precoder is used and $K,N \rightarrow \infty$ with $K/N=c \in (0,1]$, then the optimal regularization parameter is found to be
\begin{align}\label{32.1}
\rho^\star = \frac{1}{\underline{\gamma}} - \frac{c}{1 +\underline{\gamma}}
\end{align}
with $\underline{\gamma}$ given by \eqref{34}. The deterministic equivalent of the transmit power reduces to
\begin{align}\label{36.10}
\overline P_{\rm{RZF}}(t)= {c\sigma^2}\left(1 - c \frac{\underline{\gamma}}{1+\underline{\gamma}}\right)^{-1} A(t)
\end{align}
whereas $\overline p_{k}(t)= \frac{\gamma_k}{l(\mathbf{x}_{k}(t))\underline{\gamma}^2} {{\left(\overline P_{\rm{RZF}}(t) + \frac{\sigma^2}{l(\mathbf{x}_{k}(t))}\left(1+\underline{\gamma}\right)^2\right)}}.$
\end{lemma}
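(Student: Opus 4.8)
The plan is to derive everything as a specialization of Lemma~\ref{lemma2} to the RZF choice $\alpha_i(t) = 1/l(\mathbf{x}_i(t))$, for which every product collapses to $\alpha_i(t)\,l(\mathbf{x}_i(t)) = 1$. First I would substitute this identity into the fixed-point equation \eqref{mu}; the sum over $i$ then loses its dependence on the channel gains and reduces to the scalar relation $1/\mu = c/(1+\mu) + \rho$, which couples $\mu$ to the regularization parameter $\rho$. The same substitution turns the expressions in \eqref{102} for $B(t)$ and $F(t)$ into $B(t) = \underline{\gamma}/(1+\mu)^2$ and $F(t) = c/(1+\mu)^2$, with $\underline{\gamma}$ the average target SINR in \eqref{34}. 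Inserting these into \eqref{12} expresses the asymptotic power as a function of $\mu$ alone,
\begin{equation}
\overline P(t) = \frac{c\,\sigma^2 A(t)}{1 - c\,(\mu^2 + \underline{\gamma})/(1+\mu)^2}.
\end{equation}

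Because $A(t)$ is independent of $\rho$, minimizing $\overline P(t)$ over $\rho$ is the same as minimizing $g(\mu) = (\mu^2 + \underline{\gamma})/(1+\mu)^2$. Before differentiating, I would justify optimizing directly over $\mu$ instead of $\rho$. Solving the scalar relation for $\rho = 1/\mu - c/(1+\mu)$ and noting that $d\rho/d\mu = -1/\mu^2 + c/(1+\mu)^2 < 0$ for every $\mu > 0$ (using $c \le 1$ and $\mu < 1+\mu$), the map $\rho \mapsto \mu$ is a strictly decreasing bijection of $(0,\infty)$ onto itself. Hence the unconstrained minimization of $g$ over $\mu > 0$ is equivalent to the original minimization over the admissible range $\rho > 0$, and every stationary $\mu$ corresponds to a unique feasible $\rho$.

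Differentiating gives $g'(\mu) = 2(\mu - \underline{\gamma})/(1+\mu)^3$, whose unique positive root is $\mu^\star = \underline{\gamma}$; the sign change of $g'$ there certifies it as the global minimizer. Mapping back through $\rho = 1/\mu - c/(1+\mu)$ yields $\rho^\star = 1/\underline{\gamma} - c/(1+\underline{\gamma})$, establishing \eqref{32.1}. Evaluating $g(\underline{\gamma}) = \underline{\gamma}/(1+\underline{\gamma})$ in the power formula recovers \eqref{36.10}, and substituting $\mu = \underline{\gamma}$ together with $\alpha_k(t)\,l(\mathbf{x}_k(t)) = 1$ into \eqref{A19} delivers the claimed $\overline p_k(t)$.

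The individual computations are elementary once the identity $\alpha_i l = 1$ is imposed, so I expect the only delicate step to be the optimization. The hard part will be making the reduction from optimizing over $\rho$ to optimizing over $\mu$ airtight: one must confirm the monotone bijection between $\rho$ and $\mu$ so that $\mu^\star = \underline{\gamma}$ genuinely corresponds to an admissible $\rho^\star > 0$ and to the global, rather than merely local, minimum of the transmit power.
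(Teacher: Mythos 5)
Your proof is correct and follows essentially the same route as the paper: specialize Lemma~\ref{lemma2} to $\alpha_i(t)l(\mathbf{x}_i(t))=1$, reduce $\overline P(t)$ to a function of $\mu$, and locate the minimizer at $\mu^\star=\underline{\gamma}$ by elementary calculus. The only difference is that the paper differentiates $\overline P_{\rm RZF}(t)$ directly with respect to $\rho$ (obtaining a derivative proportional to $\underline{\gamma}-\mu$, with the intermediate steps omitted), whereas you change variables to $\mu$ after verifying the monotone bijection $\rho\mapsto\mu$ — a slightly cleaner execution that also makes explicit the global-minimum justification the paper leaves implicit.
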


\begin{IEEEproof}
Setting $\alpha_i =1/l(\mathbf{x}_{i}(t))$ for $i=1,2,\ldots,K$ into \eqref{mu} and \eqref{12} -- \eqref{102} yields
\begin{align}\label{C30}
\overline P_{\rm{RZF}}(t)  = c\sigma^2\frac{ \left(1+\mu\right)^2A(t)}{\mu\left(c+\rho\left(1+\mu\right)^2\right) -c\underline{\gamma}}
\end{align}
with
\begin{align}\label{mu.10}
\mu = \left(\frac{c}{1 + \mu} + \rho\right)^{-1}.
\end{align}
Taking the derivative of $\overline P_{\rm{RZF}}(t)$ in \eqref{C30} with respect to $\rho$, one gets (the mathematical steps are omitted for space limitations)
\begin{align}\label{C31}
\frac{\partial \overline P_{\rm{RZF}}(t)}{\partial \rho}  =2c^2\sigma^2\frac{\left(\underline{\gamma}- \mu\right) A(t)}{\left(\mu \left(c + \rho\left(1+\mu\right)^2\right) - c\underline{\gamma}\right)^2}.
\end{align}
From the above equation, it turns out that the minimum power is achieved when $\mu$ is such that $\mu= \underline{\gamma}$. Plugging this result into \eqref{mu.10} yields \eqref{32.1}.
\end{IEEEproof}

{To the best of our knowledge, this is the first time that the value of $\rho$ that minimizes the power consumption in the asymptotic regime is given in the explicit form of \eqref{32.1}.} Most of works have only looked for the value of $\rho$ that maximizes the sum rate of the network (see for example \cite{Wagner12}). {Interestingly, the results of Lemma 2 can be used to prove the following corollary, which states that RZF is asymptotically equivalent to the optimal linear precoder when the same rate constraint is imposed for all UEs.}
\begin{corollary}\label{optimality_RZF}
If the same target SINR is imposed for each user, i.e., $\boldsymbol{\gamma} = \gamma \mathbf{1}_K$,
then RZF becomes optimal in the asymptotic regime.
\end{corollary}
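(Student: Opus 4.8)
The plan is to establish optimality at two levels: first the cheap statement that the deterministic-equivalent transmit power of RZF matches that of the optimal linear precoder, and then the stronger structural fact that the two precoders in fact produce the same beamforming matrix in the large-system limit. Since the power-minimization problem \eqref{Opt_problem} measures optimality through its objective value, the power-level comparison already suffices, but the structural argument explains \emph{why} equality holds and is barely more work.

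For the power-level step I would simply specialize the two closed forms. Setting $\boldsymbol{\gamma} = \gamma\mathbf{1}_K$ collapses the average target SINR to $\underline{\gamma} = \gamma$, so the RZF power in \eqref{36.10} becomes $\overline{P}_{\rm{RZF}}(t) = c\sigma^2(1 - c\gamma/(1+\gamma))^{-1}A(t)$, which is literally the OLP power $\overline{P}_{\rm{OLP}}(t)$ of \eqref{26.3} in Corollary \ref{corollary_OPL_same_rates}. For the structural step I would exhibit the proportionality of the two regularization matrices. Recall that for RZF $\alpha_i(t) = 1/l(\mathbf{x}_i(t))$ and $\mathbf{w}_i(t)\mathbf{w}_i^H(t) = l^{-1}(\mathbf{x}_i(t))\mathbf{h}_i(t)\mathbf{h}_i^H(t)$, and note that with $\underline{\gamma} = \gamma$ the optimal regularization \eqref{32.1} evaluates to $\rho^\star = 1/\gamma - c/(1+\gamma) = \eta/\gamma$, where $\eta = 1 - c\gamma/(1+\gamma)$. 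Combining $\rho^\star = \eta/\gamma$ with the asymptotic multipliers $\overline{\lambda}_k(t) = \gamma/(l(\mathbf{x}_k(t))\eta)$ from \eqref{29} yields the identity $\sum_i \alpha_i(t)\mathbf{h}_i(t)\mathbf{h}_i^H(t) + N\rho^\star\mathbf{I}_N = (\eta/\gamma)\bigl(\sum_i \overline{\lambda}_i(t)\mathbf{h}_i(t)\mathbf{h}_i^H(t) + N\mathbf{I}_N\bigr)$. Inverting, the RZF beamforming directions are $\gamma/\eta$ times the OLP directions. This scalar is exactly absorbed by the power allocation: the OLP coefficients in \eqref{30} and the RZF coefficients of Lemma 2 satisfy $\overline{p}_k^{\rm{OLP}}(t)/\overline{p}_k^{\rm{RZF}}(t) = \gamma^2/\eta^2$, the square of the direction scaling, so that $\mathbf{V}_{\rm{RZF}}(t)$ and $\mathbf{V}_{\rm{OLP}}(t)$ coincide asymptotically.

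The one delicate point is that the matrix identity above holds only after replacing the true Lagrange multipliers $\lambda_k^\star(t)$ by their deterministic equivalents $\overline{\lambda}_k(t)$; justifying that this substitution is asymptotically exact inside the resolvent is where the real content lies. This is precisely guaranteed by Theorem \ref{lemma:asymptotic-beamforming}, whose uniform convergence $\max_k |\lambda_k^\star(t) - \overline{\lambda}_k(t)| \to 0$ almost surely, together with the boundedness of the relevant spectra, lets the vanishing perturbation of the multipliers pass through the inverse without affecting the limit. I expect controlling this perturbation to be the main obstacle in principle, but since it is already the substance of the cited large-system results it reduces here to invoking Theorem \ref{lemma:asymptotic-beamforming} and Corollary \ref{corollary_OPL_same_rates}, after which the equality $\overline{P}_{\rm{RZF}}(t) = \overline{P}_{\rm{OLP}}(t)$ and hence asymptotic optimality of RZF follow immediately.
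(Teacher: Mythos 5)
Your proposal is correct and its structural core is exactly the paper's own (very terse) proof: with $\underline{\gamma}=\gamma$ one has $\rho^\star=\eta/\gamma$, so the RZF resolvent is a scalar multiple of the OLP resolvent with $\lambda_k^\star$ replaced by $\overline{\lambda}_k$ from \eqref{26.2}, and the scalar is absorbed by the power allocation. Your additional power-level check ($\overline{P}_{\rm{RZF}}=\overline{P}_{\rm{OLP}}$ via \eqref{36.10} and \eqref{26.3}) and the remark about passing the vanishing perturbation of the multipliers through the inverse are correct elaborations of details the paper leaves implicit.
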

\begin{IEEEproof}If $\boldsymbol{\gamma} = \gamma \boldsymbol{1}_k$, then \eqref{32.1} reduces to $\rho^\star = \frac{1}{\gamma} - \frac{c}{1 + \gamma}$
and $\mathbf{V}_{\rm{RZF}}(t)$ in \eqref{31} becomes equivalent to \eqref{22} after replacing $\lambda_k^\star$ with $\overline\lambda_k$ given by \eqref{26.2}.
\end{IEEEproof}

Observe that if $\boldsymbol{\alpha}(t)$ is set to $\boldsymbol{1}_K$, then $\mathbf{V}(t)$ in \eqref{22} reduces to\cite{Wagner12}:
\begin{align}\label{28}
\mathbf{V}_{\rm{RZF}}(t)&=\left(\sum\limits_{i=1}^K\mathbf{  h}_i(t)\mathbf{  h}_i^H(t) + N\rho\mathbf{I}_N\right)^{-1}\!\!\!\!\!\mathbf{  H}\sqrt{\mathbf{  P}}(t).
\end{align}
As shown in \cite{Sanguinetti2014_Globecom}, if $K,N \rightarrow \infty$ with $c \in (0,1]$ and $\mathbf{V}_{\rm{RZF}}(t)$ is given as above, then the value of $\rho$ minimizing $P(t) = \tr \left( \mathbf{V}(t)\mathbf{V}^H(t) \right)$ while satisfying the rate constraints is obtained as
\begin{align}\label{29.1}
\rho^\star = \frac{1}{\mu^\star} - \frac{1}{N}\sum\limits_{i=1}^K\frac{l(\mathbf{x}_{i}(t))}{1+l(\mathbf{x}_{i}(t)) \mu^\star}
\end{align}
with $\mu^\star$ being solution of the following fixed point equation:
\begin{align}\label{30.1}
\mu^\star = \left(\sum\limits_{i=1}^K \frac{l(\mathbf{x}_{i}(t)) \gamma_i}{\left(1 + l(\mathbf{x}_{i}(t))\mu^\star\right)^3}\right)\left(\sum\limits_{i=1}^K \frac{\left(l(\mathbf{x}_{i}(t))\right)^2}{\left(1 + l(\mathbf{x}_{i}(t))\mu^\star\right)^3}\right)^{-1}\!\!\!\!\!.
\end{align}
In addition, it turns out that \cite{Sanguinetti2014_Globecom}
\begin{align}\label{36.100}
\overline P_{\text{RZF}}(t)= \frac{c\sigma^2A(t)}{1 - (\mu^\star)^2 F(t) - cB(t)}
\end{align}
where $A(t)$ is given by \eqref{26.10} while $B(t)$ and $F(t)$ are obtained from \eqref{102} setting ${\alpha}_i(t)=1$ $\forall i$ and replacing $\mu$ with $\mu^\star$.
The above results are used in \cite{Sanguinetti2014_Globecom} to prove that the conventional RZF processing given by \eqref{28} becomes optimal in the asymptotic regime only if the ratio between $\gamma_k$ and $l(\mathbf{x}_{k}(t))$ is the same for any $k$. Although possible, this assumption is not very realistic in practical systems as it would imply a strong dependence between rate requirements and UE positions. For this reason, we focus our analysis on the RZF given by \eqref{31} whose optimality holds true under a more reasonable assumption of identical target UE rates.

\section{Energy Fluctuations}\label{Energy_fluctuations}
We are now left with studying the fluctuations of $E_T$ in the large system limit. Towards this end, observe that the large system analysis above shows that, for all considered schemes,  the transmit power $P(t)= \tr \left( \mathbf{V}(t)\mathbf{V}^H(t) \right)$ hardens to a deterministic quantity, which takes the following general form:
\begin{align}\label{1010}
\overline P(t) = \frac{c\sigma^2}{\eta} A(t) = \frac{c\sigma^2}{\eta} \frac{1}{K} \sum\limits_{i=1}^K \frac{\gamma_i}{l({\bf x}_i(t))}
\end{align}
with $\eta$ being specifically given by
\begin{align}\label{eta_different_linear}
\eta = \left\{ {\begin{array}{*{20}{l}}
{1 - \frac{c}{K}\sum\limits_{i=1}^K \frac{\gamma_i}{1+\gamma_i}}&{\text{for OLP}}\\
{1 - c\underline{\gamma} = 1 - \frac{c}{K}\sum\limits_{i=1}^K\gamma_i}&{\text{for MRT}}\\
{1-c = 1-\frac{K}{N}}&{\text{for ZF}}\\
{1 - c \frac{\underline{\gamma}}{1+\underline{\gamma}} = 1 -c \frac{\sum\limits_{i=1}^K\gamma_i}{K+\sum\limits_{i=1}^K\gamma_i}}&{\text{for RZF.}}
\end{array}} \right.
\end{align}
As seen, in the large system limit the power expenditure of all considered schemes is a function of time due to the fluctuations induced on the average channel attenuation by $\{{\bf x}_i(t);i=1,2,\ldots,K\}$. The statistics of the large-scale UE movements can be exploited to calculate the temporal fluctuations of each term in the right-hand-side of \eqref{1010}. These results are stated in the following two lemmas, which are proved in Appendix C.
\begin{lemma}\label{lemma_mean}
The average value of $l^{-1}({\bf x}_k(t))$ does not depend on the UE movements and hardens to a deterministic quantity independent of time $t$, which is the same for all UEs and given by
\begin{align}\label{100}
\mathsf{E}_{\mathbf{X}_k}\left[l^{-1}({\bf x}_k(t))\right] = \frac{1}{|\mathcal C|}\int_{\mathcal C} \frac{1}{l(\mathbf{x})}d{\bf x}
\end{align}
with $|\mathcal C| = \pi R^2$.
\end{lemma}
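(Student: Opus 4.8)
The plan is to reduce the claim to a single fact about the confined random walk: that the marginal law of $\mathbf{x}_k(t)$ coincides with the uniform distribution on $\mathcal{C}$ for \emph{every} $t\ge 0$. Granting this, the identity follows at once, since
\[
\mathsf{E}_{\mathbf{X}_k}\!\left[l^{-1}(\mathbf{x}_k(t))\right] = \int_{\mathcal{C}} l^{-1}(\mathbf{x})\, p_t(\mathbf{x})\, d\mathbf{x} = \frac{1}{|\mathcal{C}|}\int_{\mathcal{C}} \frac{1}{l(\mathbf{x})}\, d\mathbf{x},
\]
where $p_t$ denotes the density of $\mathbf{x}_k(t)$; and the right-hand side is visibly independent of $t$, of the walk parameters $\ell,\xi$ (equivalently of the diffusion coefficient $D$), and of the index $k$, because all UEs follow i.i.d.\ walks with the same initial law. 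The three qualitative assertions of the lemma are therefore all consequences of the single invariance statement.

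To establish that invariance, first I would pass to the Brownian-motion limit described in the system model and Appendix~B, in which $\mathbf{X}_k$ becomes a reflected Brownian motion confined to $\mathcal{C}$ with diffusion coefficient $D$. The density $p_t$ then evolves according to the diffusion (Fokker--Planck) equation $\partial_t p_t = D\,\nabla^2 p_t$ on $\mathcal{C}$, supplemented by the reflecting (Neumann) boundary condition $\partial p_t/\partial n = 0$ on $\partial\mathcal{C}$, with initial datum $p_0 \equiv 1/|\mathcal{C}|$ as assumed for the UE positions.

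The key step is the observation that the constant density $1/|\mathcal{C}|$ is a stationary solution of this boundary-value problem --- the Laplacian kills constants and the Neumann condition holds trivially --- so by uniqueness of the solution of the heat equation with Neumann data, the uniform initial condition propagates unchanged: $p_t \equiv 1/|\mathcal{C}|$ for all $t$. Equivalently, at the level of the discrete walk, each step is an independent isotropic displacement reflected at $\partial\mathcal{C}$, whose transition kernel is symmetric, so the uniform measure is invariant and the starting uniform law is preserved at every step. Either route yields the uniform marginal for all $t$, completing the argument; for the pathloss model of Example~\ref{example_1} the remaining spatial integral is exactly the elementary evaluation already recorded in \eqref{avg_pathloss}.

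The hard part will be making the boundary mechanism rigorous: one must verify that ``constrained to $\mathcal{C}$'' is realized by \emph{reflection} rather than absorption or killing, so that no probability mass escapes and the Neumann condition (not a Dirichlet one) is the correct companion to the diffusion equation, and that this reflecting structure survives the discrete-to-continuum limit. Once the reflecting dynamics are fixed, the invariance of the uniform measure --- the crux of the whole lemma --- is immediate, and everything else is either elementary integration or a direct appeal to the identical law of the $K$ walks.
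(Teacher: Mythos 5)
Your proposal is correct and follows essentially the same route as the paper: the paper's Appendix~C writes $\mathsf{E}_{\mathbf{X}_k}[l^{-1}(\mathbf{x}_k(t))]$ as $\frac{1}{|\mathcal C|}\iint_{\mathcal C} l^{-1}(\mathbf{x})\,\prob(\mathbf{x},\mathbf{x}';t)\,d\mathbf{x}\,d\mathbf{x}'$ using the uniform initial law, and then invokes the symmetry $\prob(\mathbf{x},\mathbf{x}';t)=\prob(\mathbf{x}',\mathbf{x};t)$ together with the normalization $\int_{\mathcal C}\prob(\mathbf{x}',\mathbf{x};t)\,d\mathbf{x}'=1$ under the reflecting (Neumann) boundary condition, which is precisely your invariance-of-the-uniform-measure argument stated at the level of the transition kernel rather than the Fokker--Planck equation. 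Your identification of the reflecting boundary as the crux matches the paper's setup in Appendix~B, so there is no gap.
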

\begin{lemma} \label{lemma_covariance}The covariance of $l^{-1}({\bf x}_k(t))$ depends on the temporal correlation induced by UE movements and is such that 
\begin{align}\label{101}
\iint_{0}^T{\mathsf{COV}}_{\mathbf{X}_k}\left[l^{-1}({\bf x}_k(\tau)),l^{-1}({\bf x}_k(s))\right] d\tau ds= \frac{T R^{2}}{D} \Theta
\end{align}
where \begin{align}\label{105}
\Theta &= \sum\limits_{i=1}^\infty \frac{2\phi_{i}^2}{ \kappa_{i}^2 J_0^2(\kappa_{i})}\int_0^1{\left(1-e^{-{\frac{\kappa_{i}^2DTt}{R^2}}}\right)^2 dt}
\end{align}
and
\begin{align}\label{106}
\phi_{i} =2\int_0^1  \frac{1} {{l(R\mathbf{z})}} J_0(\kappa_{i}z) z dz
\end{align}
with $\kappa_{i}$ being the $i$th zero of  $J_1(x)$.
\end{lemma}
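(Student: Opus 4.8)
The plan is to read the left-hand side of \eqref{101} as a single variance. Since $\iint_0^T {\mathsf{COV}}_{\mathbf{X}_k}[l^{-1}(\mathbf{x}_k(\tau)),l^{-1}(\mathbf{x}_k(s))]\,d\tau\,ds = {\mathsf{VAR}}_{\mathbf{X}_k}\big[\int_0^T l^{-1}(\mathbf{x}_k(t))\,dt\big]$, it suffices to evaluate this variance through the spectral theory of the underlying diffusion. As explained in the system model, under the scaling $\ell,\xi\to0$ with $\ell^2/\xi=4D$ held fixed the constrained walk $\mathbf{X}_k$ converges to a reflected Brownian motion on the disk $\mathcal C$, whose generator is $D\nabla^2$ equipped with Neumann (reflecting) boundary conditions on $\partial\mathcal C$; its invariant law is uniform on $\mathcal C$, consistent with the time-independent mean of Lemma \ref{lemma_mean}. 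First I would write the transition density as $p(u,\mathbf{x},\mathbf{y})=\sum_n e^{-D\lambda_n u}\psi_n(\mathbf{x})\psi_n(\mathbf{y})$, where $(\lambda_n,\psi_n)$ are the Neumann eigenpairs of $-\nabla^2$ on $\mathcal C$ with $\{\psi_n\}$ orthonormal in $L^2(\mathcal C)$.

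The second step exploits the radial symmetry of $l^{-1}$. Separating variables in polar coordinates gives eigenfunctions $J_m(\sqrt{\lambda}\,r)e^{im\theta}$, and the Neumann condition $\frac{d}{dr}J_m(\sqrt{\lambda}\,r)|_{r=R}=0$ selects the admissible $\lambda$. Because $l^{-1}(\mathbf{x})$ depends only on $\|\mathbf{x}\|$, its expansion contains only the $m=0$ modes, for which the boundary condition becomes $J_0'(\kappa)=-J_1(\kappa)=0$; the relevant radial frequencies are therefore exactly the zeros $\kappa_i$ of $J_1$, with eigenvalues $\lambda_i=(\kappa_i/R)^2$ and decay rates $a_i\equiv D\kappa_i^2/R^2$. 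The normalized radial eigenfunctions read $\psi_i(\mathbf{x})=N_i J_0(\kappa_i\|\mathbf{x}\|/R)$, and the identity $\int_0^1 J_0(\kappa_i z)^2 z\,dz=\tfrac12 J_0^2(\kappa_i)$ (which holds precisely because $J_1(\kappa_i)=0$) fixes $N_i$ and is the source of the factor $J_0^2(\kappa_i)$ in \eqref{105}. Projecting $l^{-1}$ onto $\psi_i$ and substituting $r=Rz$ then identifies the overlap integral with $\phi_i$ in \eqref{106}.

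The third step forms the covariance and performs the double time integral. Inserting the spectral expansion and using the Markov property, the covariance of $l^{-1}(\mathbf{x}_k(\tau))$ and $l^{-1}(\mathbf{x}_k(s))$ reduces to a sum over the radial modes; removing the equilibrium mean of Lemma \ref{lemma_mean} kills the $i=0$ term and leaves, for each $i\ge1$, a contribution proportional to $e^{-a_i|\tau-s|}-e^{-a_i(\tau+s)}$. The remainder is elementary calculus: one computes $\iint_0^T e^{-a_i|\tau-s|}\,d\tau\,ds$ together with $\iint_0^T e^{-a_i(\tau+s)}\,d\tau\,ds=((1-e^{-a_iT})/a_i)^2$, and after rescaling time by $T$ (so that $a_iT=\kappa_i^2DT/R^2$) the two pieces combine into the compact form $\frac{2T}{a_i}\int_0^1(1-e^{-\kappa_i^2DTt/R^2})^2\,dt$. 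Multiplying by the coefficient $\phi_i^2/J_0^2(\kappa_i)$ from the second step and using $a_i=D\kappa_i^2/R^2$ reproduces exactly the summand of $\Theta$ in \eqref{105}, and hence the right-hand side $\frac{TR^2}{D}\Theta$ of \eqref{101}.

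The step I expect to be the main obstacle is not the algebra but pinning down the diffusion limit together with the reference about which fluctuations are measured. One must justify rigorously that the reflected random walk converges to reflected Brownian motion on $\mathcal C$, so that the boundary condition is genuinely Neumann, the invariant measure is uniform, and the spectral data are the zeros of $J_1$ (not of $J_0$). Equally important is the observation that the stationary kernel alone---obtained if the starting position were averaged over the uniform law jointly with the walk---would produce only the $e^{-a_i|\tau-s|}$ term and hence the weaker factor $1-(1-e^{-b})/b$ (with $b=\kappa_i^2DT/R^2$); the stated squared factor $(1-e^{-bt})^2$ appears only when the variance is taken conditionally on the (uniformly placed) initial UE position and then averaged, which is precisely what introduces the $-e^{-a_i(\tau+s)}$ contribution. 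Getting this conditioning right, and keeping the normalization $N_i$, the factor $2$ in $\phi_i$, and the identity $\int_0^1 J_0(\kappa_i z)^2 z\,dz=\tfrac12 J_0^2(\kappa_i)$ mutually consistent, is where care is needed; convergence of the series \eqref{105} then follows from the boundedness and smoothness of $l^{-1}$, which force rapid decay of the $\phi_i$.
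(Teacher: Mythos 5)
Your proposal is correct and follows essentially the same route as the paper's Appendix C: conditioning on the uniformly distributed initial position, using the Markov property to reduce the covariance kernel to $\prob(\mathbf{x},\mathbf{x}'';t-s)-\prob(\mathbf{x},\mathbf{x}'';t+s)$, and expanding in the Neumann eigenfunctions of the disk (where radial symmetry retains only the $m=0$ modes with zeros of $J_1$). You have in fact carried out explicitly the normalization and double time integral that the paper dismisses as ``standard but lengthy computations,'' and your identity $\iint_0^T\bigl(e^{-a|\tau-s|}-e^{-a(\tau+s)}\bigr)\,d\tau\,ds=\frac{2T}{a}\int_0^1\bigl(1-e^{-aTt}\bigr)^2dt$ checks out.
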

The values of $\{\kappa_{i}\}$ in \eqref{105} and \eqref{106} can be found in \cite[pp. 390] {Abramowitz_Stegun_book} whereas $\{\phi_i\}$ can be calculated explicitly using formulae in \cite[pp. 684]{Gradshteyn_Ryzhik_book}. Observe also that the sum in \eqref{105} requires only to compute a few terms as it converges fast to its effective value. 

The following theorem summarizes one of the major results of this work.
\begin{theorem}\label{CLT}
If linear precoding is used at the BS, then the following convergence holds true in the large system limit:
\begin{align}\label{convergence}
\sqrt{K}\left(\frac{E_T - \epsilon}{\sqrt{\Sigma}}\right) \mathop  {\longrightarrow} \limits_{K,N \to \infty}^{\mathcal D} \mathcal{N}(0,1)
\end{align}
where
\begin{align}\label{avg_pow}
\epsilon = T\frac{c\sigma^2}{\eta} \left(\frac{1}{K}\sum\limits_{i=1}^K\gamma_i\right) \frac{1}{|\mathcal C|}\int_{\mathcal C} \frac{1}{l(\mathbf{x})}d{\bf x}
\end{align}
\begin{align}\label{var_pow}
\Sigma &= \left(\frac{c\sigma^2}{\eta}\right)^2\left(\frac{1}{K}\sum\limits_{i=1}^K\gamma_i^2\right) \frac{T R^{2}}{D} \Theta
\end{align}
with $\Theta$ being computed as in \eqref{105}.
\end{theorem}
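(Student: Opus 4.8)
The plan is to isolate the randomness coming from the small-scale fading from that coming from the user trajectories, and to combine the two via Slutsky's theorem. Writing $\overline E_T=\int_0^T\overline P(t)\,dt$ for the energy obtained by replacing the instantaneous power in \eqref{E_T} with its deterministic equivalent \eqref{1010}, I would split
\[
\sqrt K\,\frac{E_T-\epsilon}{\sqrt\Sigma}
=\sqrt K\,\frac{E_T-\overline E_T}{\sqrt\Sigma}
+\sqrt K\,\frac{\overline E_T-\epsilon}{\sqrt\Sigma}
\]
and show that the first (fading) term vanishes in probability while the second (position) term converges to $\mathcal N(0,1)$.

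For the fading term, conditioning on the trajectories $\{\mathbf X_i\}$, the hardening results of Section~\ref{Optimal_precoding} and Section~\ref{Heuristic_precoding} (Theorem~\ref{lemma:asymptotic-beamforming}, Corollaries~\ref{corollary_MRT} and~\ref{corollary_ZF}, Lemma~\ref{lemma2}) give $P(t)-\overline P(t)\to 0$ almost surely for each fixed $t$; moreover, the fluctuation of such covariance-type functionals of the fading matrix has conditional variance of order $1/K^2$, as established in \cite{Bai2004_CLT_covariance_matrices,Sanguinetti2014_ICASSP}. Using Fubini together with Cauchy--Schwarz on the double time integral, and the lower bound on $l(\cdot)$ to obtain a control uniform in $t$, I would bound $\mathsf{VAR}[E_T-\overline E_T\mid\{\mathbf X_i\}]=O(1/K^2)$, so that $\mathsf{VAR}[\sqrt K(E_T-\overline E_T)]=O(1/K)\to 0$ and the fading term is $o_P(1)$.

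For the position term I would exploit that $\overline E_T$ depends only on the (independent) walks: from \eqref{1010},
\[
\overline E_T=\frac{c\sigma^2}{\eta}\,\frac1K\sum_{i=1}^K Y_i,
\qquad
Y_i:=\gamma_i\int_0^T\frac{dt}{l(\mathbf x_i(t))},
\]
with the $Y_i$ independent because the trajectories $\mathbf X_i$ are independent. Lemma~\ref{lemma_mean} gives $\mathsf{E}[Y_i]=\gamma_i T\,|\mathcal C|^{-1}\!\int_{\mathcal C}l^{-1}(\mathbf x)\,d\mathbf x$, hence $\mathsf{E}[\overline E_T]=\epsilon$ exactly; Lemma~\ref{lemma_covariance} gives $\mathsf{VAR}[Y_i]=\gamma_i^2\,\tfrac{TR^2}{D}\Theta$, and by independence $\mathsf{VAR}[\overline E_T]=(c\sigma^2/\eta)^2K^{-2}\big(\sum_i\gamma_i^2\big)\tfrac{TR^2}{D}\Theta=\Sigma/K$. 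Thus the position term is centred with unit variance, and it remains only to prove its asymptotic normality.

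I would close the argument with the Lyapunov central limit theorem applied to the independent centred summands $(c\sigma^2/\eta)(Y_i-\mathsf{E}[Y_i])/\sqrt K$. Since $l(\cdot)$ is bounded below, each $Y_i$ takes values in a bounded interval and therefore has finite moments of all orders; with $s_K^2=\sum_i\mathsf{VAR}[Y_i]$ growing linearly in $K$ (using that $\frac1K\sum_i\gamma_i^2$ stays bounded away from $0$ and $\infty$), the fourth-moment Lyapunov ratio $s_K^{-4}\sum_i\mathsf{E}\,|Y_i-\mathsf{E}Y_i|^4=O(1/K)\to 0$, so the condition holds and the position term tends in distribution to $\mathcal N(0,1)$. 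Combining the two terms by Slutsky gives \eqref{convergence}. The delicate step is the second paragraph: the hardening statements are only pointwise in $t$ and almost sure, so making the fading term negligible at the $\sqrt K$ scale requires both a quantitative $O(1/K^2)$ variance bound uniform over $t\in[0,T]$ and a legitimate interchange of the large-system limit with the time integration in \eqref{E_T}; the normality of the position term is comparatively routine once Lemmas~\ref{lemma_mean} and~\ref{lemma_covariance} are available.
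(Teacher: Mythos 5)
Your proposal is correct and follows essentially the same route as the paper's Appendix D: the same split of $E_T-\epsilon$ into a small-scale-fading part with variance $O(1/K^2)$ (controlled via \cite{Bai2004_CLT_covariance_matrices}) and a user-position part written as a sum of independent per-user functionals of the Brownian paths, whose mean and variance come from Lemmas \ref{lemma_mean} and \ref{lemma_covariance}. If anything, you are more explicit than the paper at the final step, since you name the Lyapunov condition and Slutsky's theorem where the paper simply asserts the Gaussian limit.
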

\begin{IEEEproof}
The proof is given in Appendix D.
\end{IEEEproof}

{{\begin{table*}[t]
\renewcommand{\arraystretch}{1.}
\caption{Parameter setting}
\label{table_coefficients}
\centering
\begin{tabular}{|c|c||c|c|}
\hline
{\bf Parameter} &  {\bf Value} & {\bf Parameter} &  {\bf Value}\\
\hline
Bandwidth &  $W =20$ MHz & Diffusion coefficient & $D = 1250$ m$^2$/minute
\\
\hline
Cell radius &  $R=500$ m & Pathloss coefficient & $ \beta=4$ \\
\hline
Cut-off parameter &  $\bar x =  25$ m & Average pathloss attenuation at $\bar x$ & $ L_{\bar x} = -93$ dB\\
\hline
Carrier frequency & $f_c = 2.4$ GHz & Noise power & $\sigma^2 =-97.8$ dBm \\
\hline
Average step size length &  $\ell =50$ m & Time interval & $T = 3,6,12$ or $24$ hrs\\
\hline
\end{tabular}
\end{table*}}}

\begin{table*}[t]
\renewcommand{\arraystretch}{1.}
\caption{{Numerical values of $\mathsf{E}\left[E_T\right]/\epsilon$ and $K \mathsf{VAR}\left[E_T\right]/\Sigma$ for the parameter setting in Figs. \ref{fig1_A} -- \ref{fig3_A}.} when $T=3,6,12$ and $24$ [hrs].}
\label{table_values} \vskip2mm
\centering
\begin{tabular}{|l||c|c||c|c||c|c|}
\hline
&  {{Fig. 1 }}& {{Fig. 1 }}& {{Fig. 2 }}& {{Fig. 2 }}& {{Fig. 3 }}& {{Fig. 3 }}\\
\hline
&  $\mathsf{E}\left[E_T\right]/\epsilon$ &  $K\mathsf{VAR}\left[E_T\right]/\Sigma$ & $\mathsf{E}\left[E_T\right]/\epsilon$ &  $K\mathsf{VAR}\left[E_T\right]/\Sigma$ & $\mathsf{E}\left[E_T\right]/\epsilon$ &  $K\mathsf{VAR}\left[E_T\right]/\Sigma$\\
\hline
$T=3$ [hrs] & $1.030$  & $1.1887 $ & $1.023  $  & $1.1207$  & $1.019$  & $1.0675$\\
$T=6$ [hrs] & $1.011$  & $1.1287 $ & $1.008  $  & $1.0901$  & $1.006$  & $1.048$\\

$T=12$ [hrs] & $1.005$  & $1.0659$ & $1.004$  & $1.0428$  & $1.003$  & $1.021$\\

$T=24$ [hrs] & $1.001$  & $1.0459$ & $1.000$  & $1.0211$  & $1.000$  & $1.018$\\
\hline
\end{tabular}
\end{table*}

Theorem \ref{CLT} says that the fluctuation of $E_T$ around $\epsilon$ is approximately Gaussian with variance $\Sigma/K$. Some intuition on this result is as follows. As mentioned earlier, the temporal variability of the energy $E_T$ basically depends through $\mathbf{V}(t)$ on the variations of the small-scale fading matrix $\{\mathbf{W}(t)\}$ and the UE positions $\{{{\mathbf{x}}_i(t)};i=1,2,\ldots,K\}$ in the cell. However, the statistics of these two contributions and their impact on the fluctuations of $E_T$ are relatively different. To see how this comes about, recall that the coherence time $\Delta\tau$ of $\mathbf{W}(t)$ is $\sim \lambda\xi/\ell$ where $\lambda$ is the wavelength and {$\ell/\xi$} is the UE velocity between two successive steps. In addition, the variance of the power required to only compensate $\mathbf{W}(t)$ (which corresponds to the case of motionless UEs) is known to scale as $1/K^{2}$ due to the $K^2$ degrees of freedom of $\mathbf{W}(t)$ \cite{Bai2004_CLT_covariance_matrices}. This means that the energy fluctuations induced by the compensation of the small-scale fading only behaves roughly as $\sim \Delta\tau/K^2$. {Differently from the small-scale fading (where the coherence distance is $\sim \lambda\xi$), the coherence distance of UE movements (defined as the distance after which the correlation between the initial and final position becomes small) is related to the cell radius $R$ and the corresponding coherence time can be roughly quantified as $\sim R^2/D$ with $4D = \ell^2/\xi$. The latter is much larger than $\Delta\tau \sim \lambda\xi/\ell$ for typical parameter settings (see Example \ref{example_2}).
In addition, the variance of the power due to the randomness of $\{{\mathbf{x}_i(t)}\}$ scales as $1/K$ rather than as $1/K^2$ (see Appendix D). Putting these two facts together, it follows that the energy fluctuations induced by UE mobility are on the order of $\sim R^2/(KD)$ and thus largely dominate the variability of the small-scale fading, which is on the order of $\sim \Delta\tau/K^2$.}
\begin{example}
Assume that the pathloss function $l(\mathbf{x})$ is modelled as in \eqref{avg_pathloss_1}. In these circumstances, 
\begin{align}\label{avg_pow_2}
\epsilon = T\frac{c\sigma^2}{\eta} \left(\frac{1}{K}\sum\limits_{i=1}^K\gamma_i\right) \frac{R^{\beta}}{2\bar x^{\beta} L_{\bar x}}\left(\frac{2}{2+\beta} + \frac{\bar x^{\beta}}{R^{\beta}}\right)
\end{align}
and the coefficients $\{\phi_i\}$ in \eqref{var_pow} take the form:\footnote{Observe that $\int_0^1 J_0(\kappa_{i}x) x dx = 0$.}
\begin{align}\label{106.10}
\phi_{i} =\frac{R^{\beta}}{\bar x^{\beta} L_{\bar x}}\int_0^1 x {^\beta} J_0(\kappa_{i}x) x dx.
\end{align}
If $\beta= 4$, using the formulae in \cite[pp. 684]{Gradshteyn_Ryzhik_book} one gets
\begin{align}
\int_0^1 x {^\beta} J_0(\kappa_{i}x) x dx=
4J_0(\kappa_i)\frac{\kappa_i^2-8}{\kappa_i^4}.
\end{align}
Plugging the above results into \eqref{105} yields $\Theta = \Omega{R^{2\beta}}$
with
\begin{align}\label{105.11}
\Omega &= \frac{1}{\bar x^{2\beta} L^2_{\bar x}} \sum\limits_{i=1}^\infty \frac{32(\kappa_i^2-8)^2}{\kappa_i^{10}} \int_0^1{\left(1-e^{-{\frac{\kappa_{i}^2DTt}{R^2}}}\right)^2 dt}.
\end{align}
The variance $\Sigma$ in \eqref{var_pow} is eventually obtained as 
\begin{align}\label{var_pow_2}
\Sigma &= \left(\frac{c\sigma^2}{\eta}\right)^2\left(\frac{1}{K}\sum\limits_{i=1}^K\gamma_i^2\right) \Omega \frac{T R^{2}}{D} R^{2\beta}.
\end{align}
From the closed-form expressions in \eqref{avg_pow_2} and \eqref{var_pow_2}, it follows that the mean and variance of $E_T$ depend heavily on the values of the path-loss exponent $\beta$ and the cell radius $R$. 
\end{example} 
\begin{remark}
From the results of Theorem \ref{CLT}, it follows that the mean and variance of $E_T$ are both proportional to the time interval $T$. This means that the variability of $E_T/T$ will be less important as $T$ becomes large. Observe also that if $T$ is such that $DT/R^2\gg 1$ then the integral in \eqref{105} converges to unity and the ratio $\Sigma/T$ becomes independent from $T$. This large time limiting behavior is complementary to the large $K$ and $N$ analysis performed in this work. We will elaborate further on this result later on in Section \ref{Conclusions} as it can be exploited to consider an alternative regime in which $K$ and $N$ are finite while $T$ grows large. 
\end{remark} 

\begin{remark}
It is worth observing that the results of Theorem \ref{CLT} can in principle be extended to other linear pre-coding techniques in which $\overline P(t)$ has a more involved structure. In particular, it can be extended to the classical RZF precoder whose asymptotic power $\overline P_{RZF}(t)$ is given in \eqref{36.100}. Although being in a more complicated form than \eqref{1010}, the mean and variance of $E_T$ are still computable as they basically require to evaluate the fluctuations induced by UE movements on the different terms $\mu^\star$, $A(t)$, $B(t)$ and $F(t)$. 
\end{remark}

\section{Numerical validation and Applications}\label{Numerics}
\subsection{Numerical validation}
The accuracy of the above asymptotic statistical characterization is now validated numerically by Monte-Carlo simulations. The results are obtained for 1000 different initial positions ${\bf x}_k (0)$ for $k=1,2,\ldots,K$ within the coverage area. The simulations were performed using Matlab and the code is available for download\footnote{https://github.com/lucasanguinetti/energy-consumption-in-MU-MIMO-with-mobility.}, which enables reproducibility as well as simple testing of other parameter values. The parameter setting is given in Table \ref{table_coefficients}. The pathloss function $l(\mathbf{x})$ is modelled as in \eqref{avg_pathloss_1} with $\beta = 4$ and $L_{\bar x}=-93$ dB. The latter is such that for $f_c=2.4$ GHz the attenuation at $\bar x$ is the same as that in the cellular model analyzed in \cite{Calcev2007}.

Figs.~\ref{fig1_A} -- \ref{fig3_A} show the cumulative distribution function (CDF) of ${E_T}/{T}$ for different values of $K$, $N$ and $T$. The ratio $c=K/N$ is kept fixed and equal to $c = 0.5$ whereas the user rates $r_k$ are all set to $1.5$ bit/s/Hz. As seen, the simulation results match pretty well with the theoretical ones for both OLP/RZF and ZF in all investigated scenarios.\footnote{Recall that RZF becomes equivalent to OLP when the user rates are all the same (see Corollary \ref{optimality_RZF}).} This validates the theoretical analysis of this work and shows a substantial energy reduction in using OLP. The CDF of ${E_T}/{T}$ for MRT has the same behaviour of CDFs of OLP and RZF and it is not reported for illustration purposes as the average power required by MRT is larger than ZF and OLP (or RZF) and thus it would have comprised the legibility of the results. In particular, it turns out that if MRT is used then the mean value of ${E_T}/{T}$ is increased by a factor of $7.92$ with respect to OLP. This is in agreement to what can be easily obtained computing the ratio between the two different values of \eqref{eta_different_linear} for OLP and MRT. A close inspection of Figs.~\ref{fig1_A} -- \ref{fig3_A} reveals that for a given $T$ there is a progressive agreement between numerical and theoretical results as $K$ increases (see also the numerical values reported in Table \ref{table_values}). This is due to the fact that the finite size effect of $T$ becomes less relevant as $K$ grows large. A similar behaviour is observed if one compares the results of Figs.~\ref{fig1_A}--\ref{fig3_A} fixing $K$ while letting $T$ increase. This is because increasing $T$ allows each UE to cover larger areas of the cell and, as a consequence, the energy consumption becomes less random and closer to the mean. This in turn reduces the corresponding fluctuations.

\begin{figure}[t!]
\begin{center}
    \psfrag{xlabel}[c][b]{{\footnotesize{$\alpha$ [Watt]}}}
    \psfrag{ylabel}[c][t]{\footnotesize{$\text{Pr}\left(\frac{E_T}{T} > \alpha \right)$}}
    \psfrag{X1}[c][m]{\scriptsize{\text{OLP/RZF \quad\;}}}
    \psfrag{x2}[c][m]{\scriptsize{\text{ZF}}}
    \psfrag{data1}[l][m]{\!\!\tiny{T = 12 [h] -- Sim.}}
    \psfrag{data2}[l][m]{\!\!\tiny{T = 12 [h] -- Theory}}
        \psfrag{data3}[l][m]{\!\!\tiny{T = 3 [h] -- Sim.}}
    \psfrag{data4}[l][m]{\!\!\tiny{T = 3 [h] -- Theory}}
        \psfrag{title}[c][m]{\!\!\!\!\footnotesize{$K= 16, N=32$}}
\includegraphics[width=8.2cm]{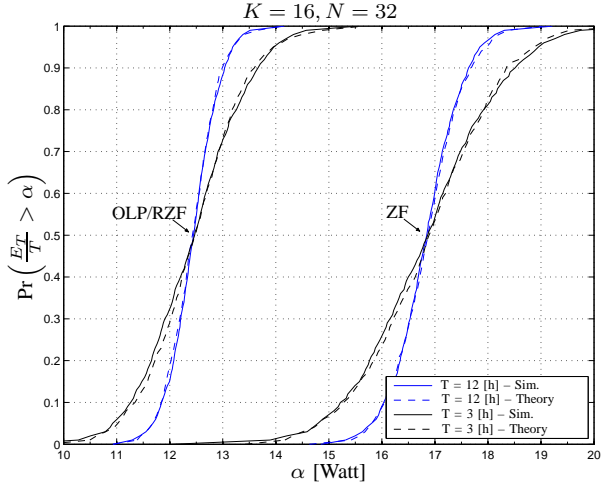}
\caption{\footnotesize{Outage probability $\text{Pr}\left(\frac{E_T}{T}> \alpha \right)$ of OLP and ZF when $K=16$ and $T=3$ or 12 hrs.}} \label{fig1_A}
\end{center}
\end{figure}

\begin{figure}
\begin{center}
    \psfrag{xlabel}[c][b]{{\footnotesize{$\alpha$ [Watt]}}}
    \psfrag{ylabel}[c][t]{\footnotesize{$\text{Pr}\left(\frac{E_T}{T} > \alpha \right)$}}
    \psfrag{X1}[c][m]{\scriptsize{\text{OLP/RZF \quad\;}}}
    \psfrag{X2}[c][m]{\scriptsize{\text{ZF}}}
    \psfrag{data1}[l][m]{\!\!\tiny{T = 12 [h] -- Sim.}}
    \psfrag{data2}[l][m]{\!\!\tiny{T = 12 [h] -- Theory}}
        \psfrag{data3}[l][m]{\!\!\tiny{T = 3 [h] -- Sim.}}
    \psfrag{data4}[l][m]{\!\!\tiny{T = 3 [h] -- Theory}}
        \psfrag{title}[c][m]{\!\!\!\!\footnotesize{$K= 32, N=64$}}
\includegraphics[width=8.2cm]{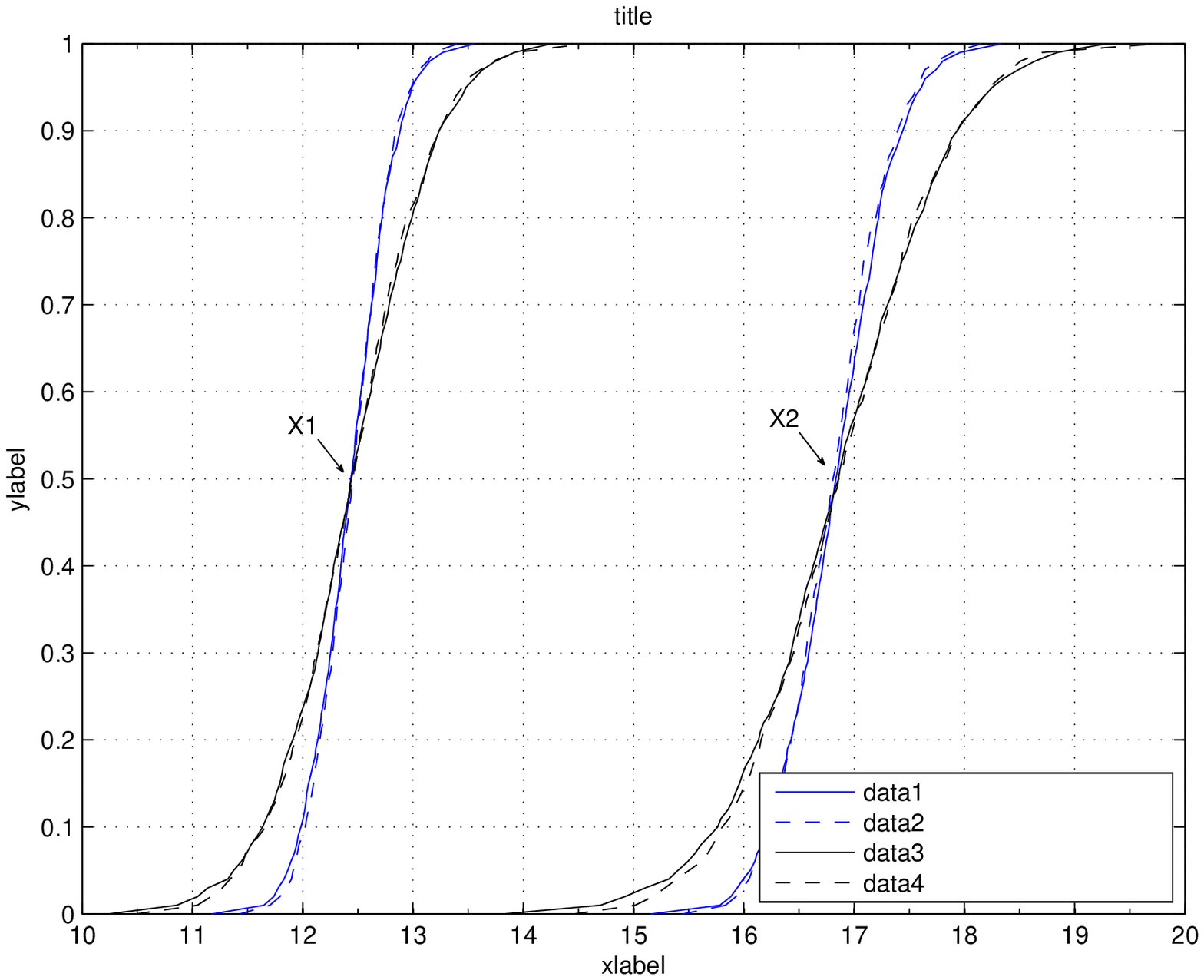}
\caption{\footnotesize{Outage probability $\text{Pr}\left(\frac{E_T}{T}> \alpha \right)$ of OLP and ZF when $K=32$ and $T=3$ or 12 hrs.}} \label{fig2_A}
\end{center} 
 \end{figure}

To evaluate the impact of the pathloss coefficient on the energy consumption statistics, we set $\beta=6$ (as an extreme case). In these circumstances, the integral on the right-hand-side of \eqref{106.10} can be computed as \cite{Gradshteyn_Ryzhik_book}: 
\begin{align}
\int_0^1 x {^\beta} J_0(\kappa_{i}x) x dx= 6J_0(\kappa_i)\frac{\kappa_i^4-24\kappa_i^2+192}{\kappa_i^6}.
\end{align}
Fig.~\ref{fig4_A} show the CDF of ${E_T}/{T}$ for $\beta=6$ when $K=32,N=64$ and $r_k=1,5$ bit/s/Hz. Comparing the results of Fig.~\ref{fig4_A} with those in Fig. \ref{fig2_A}, it is seen that both the mean and variance values of ${E_T}/{T}$ largely increase as $\beta$ becomes larger.

   \begin{figure}[t!]
\begin{center}
    \psfrag{xlabel}[c][b]{{\footnotesize{$\alpha$ [Watt]}}}
    \psfrag{ylabel}[c][t]{\footnotesize{$\text{Pr}\left(\frac{E_T}{T} > \alpha \right)$}}
    \psfrag{X1}[c][m]{\scriptsize{\text{OLP/RZF \quad\;}}}
    \psfrag{X2}[c][m]{\scriptsize{\text{ZF}}}
    \psfrag{data1}[l][m]{\!\!\tiny{T = 12 [h] -- Sim.}}
    \psfrag{data2}[l][m]{\!\!\tiny{T = 12 [h] -- Theory}}
        \psfrag{data3}[l][m]{\!\!\tiny{T = 3 [h] -- Sim.}}
    \psfrag{data4}[l][m]{\!\!\tiny{T = 3 [h] -- Theory}}
        \psfrag{title}[c][m]{\!\!\!\!\footnotesize{$K= 64, N=128$}}
\includegraphics[width=8.2cm]{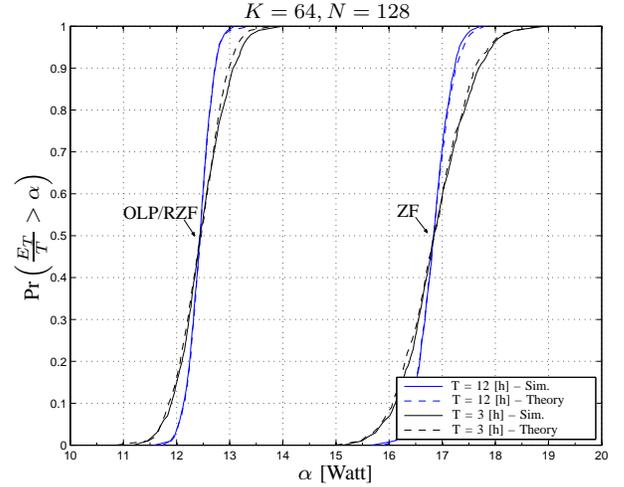}
\caption{\footnotesize{Outage probability $\text{Pr}\left(\frac{E_T}{T}> \alpha \right)$ of OLP and ZF when $K=64$ and $T=3$ or 12 hrs.}} \label{fig3_A}
\end{center} 
\end{figure}

\begin{figure}
\begin{center}
    \psfrag{xlabel}[c][b]{{\footnotesize{$\alpha$ [Watt]}}}
    \psfrag{ylabel}[c][t]{\footnotesize{$\text{Pr}\left(\frac{E_T}{T} > \alpha \right)$}}
    \psfrag{X1}[c][m]{\scriptsize{\text{OLP/RZF \quad\;}}}
    \psfrag{X2}[c][m]{\scriptsize{\text{ZF}}}
    \psfrag{data1}[l][m]{\!\!\tiny{T = 12 [h] -- Sim.}}
    \psfrag{data2}[l][m]{\!\!\tiny{T = 12 [h] -- Theory}}
        \psfrag{data3}[l][m]{\!\!\tiny{T = 3 [h] -- Sim.}}
    \psfrag{data4}[l][m]{\!\!\tiny{T = 3 [h] -- Theory}}
        \psfrag{title}[c][m]{\!\!\!\!\footnotesize{$K= 32, N=64, \beta =6$}}
\includegraphics[width=8.2cm]{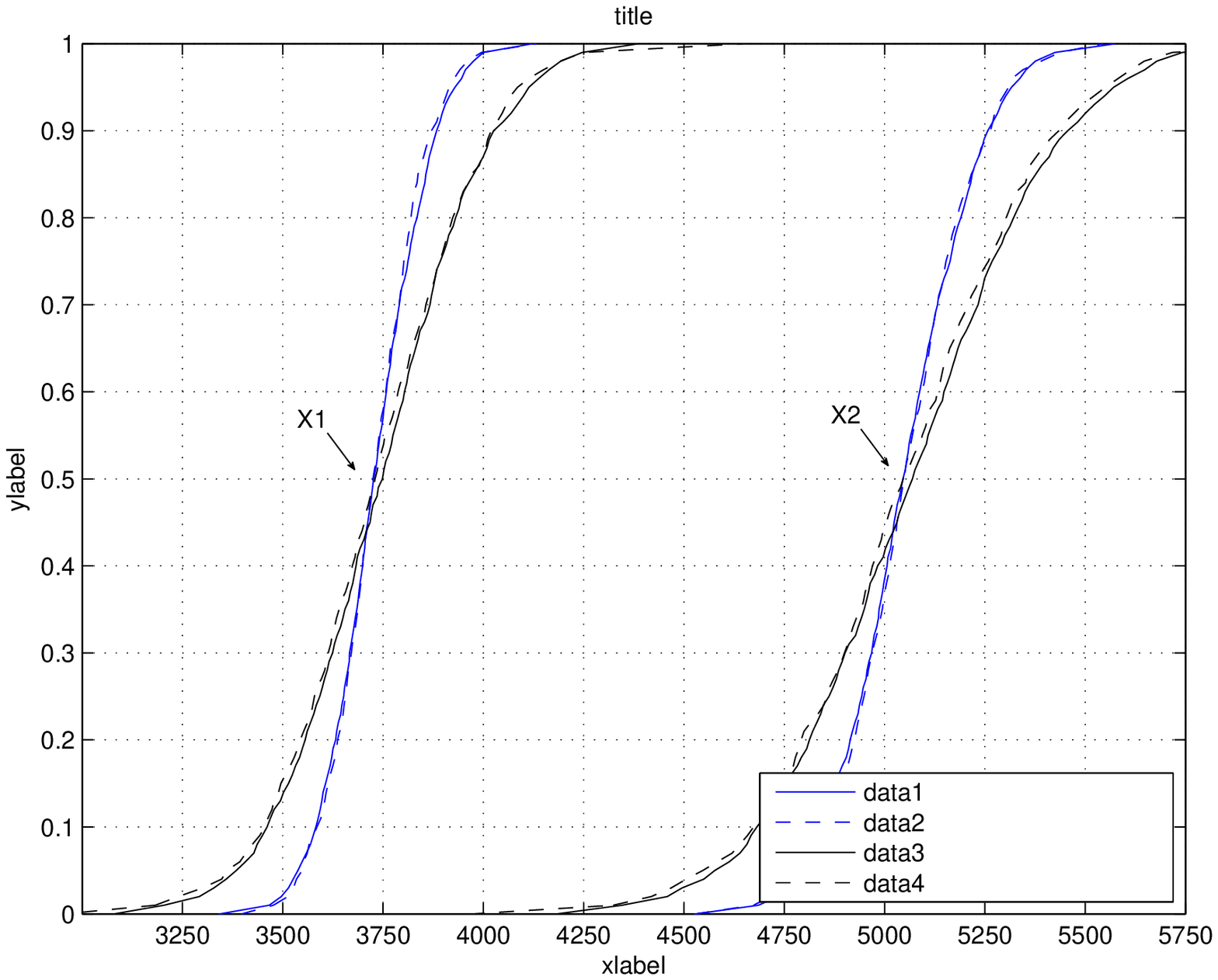}
\caption{\footnotesize{Outage probability $\text{Pr}\left(\frac{E_T}{T} > \alpha \right)$ of OLP and ZF when $K=32$, $\beta=6$ and $T=3$ or 12 hrs.}} \label{fig4_A}
\end{center}
 \end{figure}

\subsection{Dimensioning of cell battery}\label{Applications_battery}
A possible application of the results of Theorem \ref{CLT} is as follows. Assume that the energy level $\mathcal E$ of a battery-powered BS has to be designed such that a certain rate is guaranteed at each UE and the probability of running out of energy (before replacement or reloading) is smaller than some given threshold $\chi$. Mathematically, this amounts to saying that $\text{Pr}\left({E_T} > \mathcal E \right) \le \chi$. From the results of Theorem \ref{CLT}, one gets
\begin{align}\label{outage_1}
\text{Pr}\left({E_T} > \mathcal E \right) = Q\left(\sqrt{K}\,\frac{\mathcal E - \epsilon}{\sqrt{\Sigma}}\right)
\end{align}
from which it follows that
\begin{align}\label{outage_1}
\mathcal E \ge \frac{\sqrt{\Sigma}}{\sqrt K} Q^{-1}\left(\chi\right) + \epsilon.
\end{align}
Fig. \ref{fig8} illustrates the battery level $\mathcal E$ as a function of rate for different values of $K$ when $N=128,\beta = 4 $ and the replacing (or recharging) time $T$ is $12$ hrs. {Marks indicate simulation results while solid lines are obtained theoretically.} Fig. \ref{fig9} reports the values of $\mathcal E$ as a function of $K$ in the same operating conditions of Fig. \ref{fig8} with the only difference that now the UE rate requirements take values within the interval $[1, 4]$ bit/s/Hz. From the results of Figs. \ref{fig8} and \ref{fig9}, it follows that OLP and RZF provide a substantial energy saving with respect to ZF when $K$ increases. As expected, the saving is more relevant for moderate values of rate requirements $r$ since ZF is known to be suboptimal in that regime. From Fig. \ref{fig8}, it follows that for low data rates (in the range of $r=0.5$ bit/s/Hz) MRT requires the same battery level of OLP while a substantial increase is observed as $r$ grows up to $\log_2\left(1 + 1/c\right)$.

 \begin{figure}[t!]
\begin{center}
    \psfrag{xlabel}[c][b]{{\footnotesize{$r$ [bit/s/Hz]}}}
    \psfrag{ylabel}[c][t]{\footnotesize{Battery level $\mathcal E$ [Watt$\cdot$hr]}}
    \psfrag{X1}[c][m]{\tiny{$K =16$}}
    \psfrag{X2}[c][m]{\tiny{$K =64$}}
        \psfrag{X3}[c][m]{\tiny{$K =112$}}
    \psfrag{data1}[l][m]{\!\!\tiny{OLP}}
    \psfrag{data2}[l][m]{\!\!\tiny{ZF}}
        \psfrag{data3}[l][m]{\!\!\tiny{MRT}}
        \psfrag{title}[c][m]{\!\!\!\!\footnotesize{$N=128, T = 12$ hrs}}
\includegraphics[width=7.8cm]{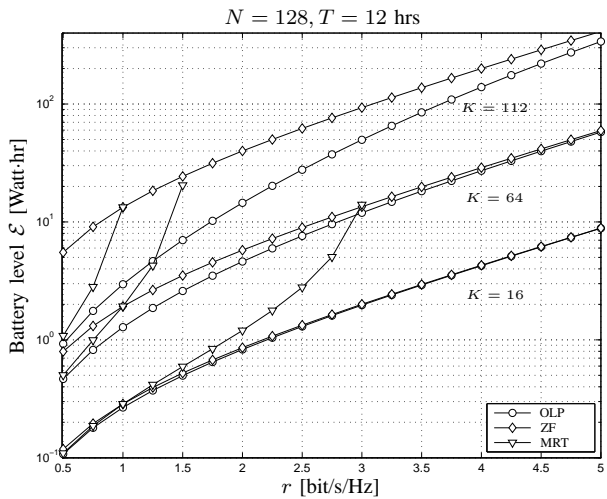}
\caption{\footnotesize{Battery level that is required by OLP, ZF and MRT to meet $\text{Pr}\left({E_T} > \mathcal E \right) \le 1\%$ when $K=16,64$ or $112$.}} \label{fig8}
\end{center}
\end{figure}

\begin{figure}
\begin{center}
    \psfrag{xlabel}[c][b]{{\footnotesize{Number of users, $K$}}}
    \psfrag{ylabel}[c][t]{\footnotesize{Battery level $\mathcal E$ [Watt$\cdot$hr]}}
    \psfrag{data1}[l][m]{\!\!\tiny{OLP}}
    \psfrag{data2}[l][m]{\!\!\tiny{RZF}}
        \psfrag{data3}[l][m]{\!\!\tiny{ZF}}
        \psfrag{title}[c][m]{\!\!\!\!\footnotesize{$N=128,T = 12$ hrs}}
\includegraphics[width=7.8cm]{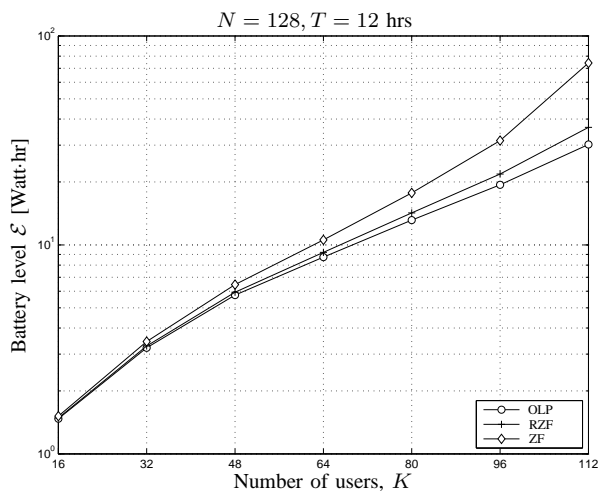}
\caption{\footnotesize{Battery level that is required by OLP, RZF and ZF to meet $\text{Pr}\left({E_T} > \mathcal E \right) \le 1\%$ when $r_k \in [1, 4]$ bit/s/Hz.}} \label{fig9}
\end{center}
 \end{figure}
 
   \begin{figure}[t!]
\begin{center}
    \psfrag{xlabel}[c][b]{{\footnotesize{$r$ [bit/s/Hz]}}}
    \psfrag{ylabel}[c][t]{\footnotesize{Cell radius $R$ [meter]}}
    \psfrag{X1}[c][m]{\tiny{$K =16$}}
    \psfrag{X2}[c][m]{\tiny{$K =64$}}
        \psfrag{X3}[c][m]{\tiny{$K =112$}}
    \psfrag{data1}[l][m]{\!\!\tiny{OLP}}
    \psfrag{data2}[l][m]{\!\!\tiny{ZF}}
        \psfrag{data3}[l][m]{\!\!\tiny{MRT}}
        \psfrag{title}[c][m]{\!\!\!\!\footnotesize{$N=128$}}
\includegraphics[width=7.8cm]{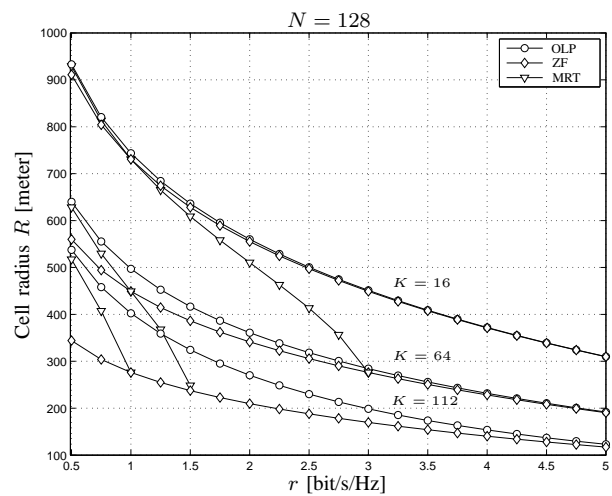}
\caption{\footnotesize{Cell radius obtained with OLP, ZF and MRT to minimize the energy consumption per unit area when $K=16,64$ or $112$.}} \label{fig10}
\end{center} 
\end{figure}

\begin{figure}
\begin{center}
    \psfrag{xlabel}[c][b]{{\footnotesize{Number of users, $K$}}}
    \psfrag{ylabel}[c][t]{\footnotesize{Cell radius $R$ [meter]}}
    \psfrag{data1}[l][m]{\!\!\tiny{OLP}}
    \psfrag{data2}[l][m]{\!\!\tiny{RZF}}
        \psfrag{data3}[l][m]{\!\!\tiny{ZF}}
        \psfrag{title}[c][m]{\!\!\!\!\footnotesize{$N=128$}}
\includegraphics[width=7.8cm]{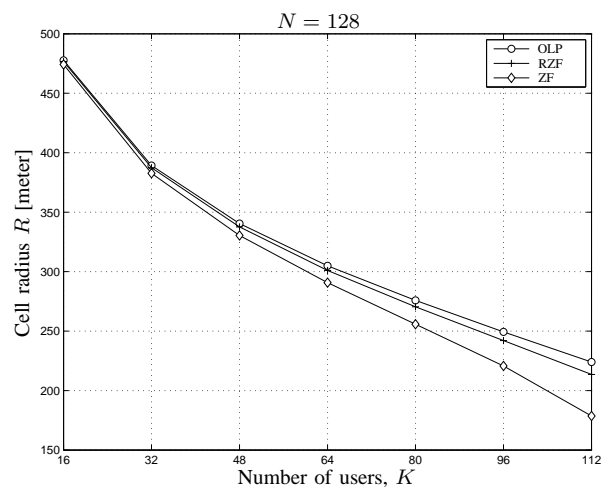}
\caption{\footnotesize{Cell radius obtained with OLP, RZF and ZF to minimize the energy consumption per unit area when $r_k \in [1, 4]$ bit/s/Hz.}} \label{fig11}
\end{center} 
 \end{figure}

Note that the condition $\chi=1\%$ makes the necessary battery level $\mathcal E$ be substantially higher than $\mathsf{E}\left[E_T\right]$.
It is also worth observing that the above value accounts only for the energy required to transmit the signal $\mathbf{s}(t)$ within the time interval $T$. An accurate design of the battery level should also take into account the power amplifier efficiency as well as the energy required for digital signal processing, channel coding and decoding, channel estimation and precoding, and so forth (see \cite{Emil13TW} for more details). However, all these quantities can be somehow quantified off-line and easily added to $\mathcal E$ for a correct design.

\subsection{Optimization of the cell radius}
Assume that a system designer must optimize the cell radius for minimizing the average energy consumption per unit area defined as:
\begin{align}\label{cell_radius}
\mathcal F = \frac{\epsilon + \vartheta T}{R^2}
\end{align}
where $\vartheta$ accounts for the power consumption that is needed for running a BS. Taking the derivative of \eqref{cell_radius} with respect to $R$ yields
\begin{align}
R^\star = \bar x\sqrt[\beta]{\left(1 + 2L_{\bar x}\frac{\eta}{c\underline{\gamma}}\frac{\vartheta}{{\sigma^2}}\right) \frac{\beta+2}{\beta-2}}.
\end{align}
Figs. \ref{fig10} and \ref{fig11} show the optimal cell radius in the same setting of Figs. \ref{fig8} and \ref{fig9}, respectively. {Marks indicate simulation results while solid lines are obtained theoretically.} The parameter $\vartheta$ is set to $18$ Watt and corresponds to the fixed power consumption required for control signals, backhaul, and so forth (this value is taken from \cite{EARTH_D23}). From the results of Figs. \ref{fig10} and \ref{fig11}, it follows that OLP and RZF allow a substantial increase of the coverage area compared to ZF especially when $K$ is large and low rate requirements are imposed.

 \subsection{Imperfect CSI}\label{imperfect_CSI}
{So far, the analysis has been carried out under the assumption of perfect knowledge of $l(\mathbf{x}_{k}(t))$ and ${\bf h}_k(t)$ for $k=1,2,\ldots,K$. Observe that $l(\mathbf{x}_{k}(t))$ corresponds to the average channel attenuation, which changes in time (roughly) three orders of magnitudes slower that the fast fading channels ${\bf h}_k(t)$ \cite{Viering_2002}. More specifically, the characteristic time of change of ${\bf h}_k(t)$ is the time it takes for the UE $k$ to move by a wavelength, i.e. $\sim \lambda \xi/\ell$ while that of $l(\mathbf{x}_{k}(t))$ is $\sim R \xi/\ell$. Therefore, $l(\mathbf{x}_{k}(t))$ changes far slower than ${\bf h}_k(t)$ by a factor $\sim R/\lambda$. In practice, this means that $l(\mathbf{x}_{k}(t))$ maintains constant for a sufficiently large number of reception phases to be accurately estimated at the BS (for example, through measurements of the received signal strength indicator). This makes it reasonable to assume perfect knowledge of $l(\mathbf{x}_{k}(t))$. On the other hand, the fast variations of ${\bf h}_k(t)$ might result into large estimation errors or outdated estimates (especially in high mobility environments) and thus should be taken into account to avoid a severe degradation of the system performance. This is why the impact of imperfect knowledge of ${\bf h}_k(t)$ has been extensively investigated in multi-user MIMO systems either in the finite number of antennas and users regime (e.g., \cite{Vucic2009, Caire2010, Gonzalez2013}) or in the asymptotic regime \cite{Wagner12,Muller13,Sanguinetti_2014_j}. While the power minimization problem with imperfect CSI is still much open for OLP in both regimes, it can be handled for both ZF and RZF. In the asymptotic regime, one might for example use the results illustrated in \cite{Sanguinetti_2014_j}.} 

   \begin{figure}[t!]
\begin{center}
    \psfrag{xlabel}[c][b]{{\footnotesize{$\alpha$ [Watt]}}}
    \psfrag{ylabel}[c][t]{\footnotesize{$\text{Pr}\left(\frac{E_T}{T} > \alpha \right)$}}
    \psfrag{X1}[c][m]{\scriptsize{\text{RZF}}}
    \psfrag{X2}[c][m]{\scriptsize{\text{ZF}}}
    \psfrag{data1}[l][m]{\!\!\tiny{T = 12 [h] -- Sim.}}
    \psfrag{data2}[l][m]{\!\!\tiny{T = 12 [h] -- Theory}}
        \psfrag{data3}[l][m]{\!\!\tiny{T = 3 [h] -- Sim.}}
    \psfrag{data4}[l][m]{\!\!\tiny{T = 3 [h] -- Theory}}
        \psfrag{title}[c][m]{\!\!\!\!\footnotesize{$K=32, N=64, \tau^2 =0.05$}}
\includegraphics[width=8.2cm]{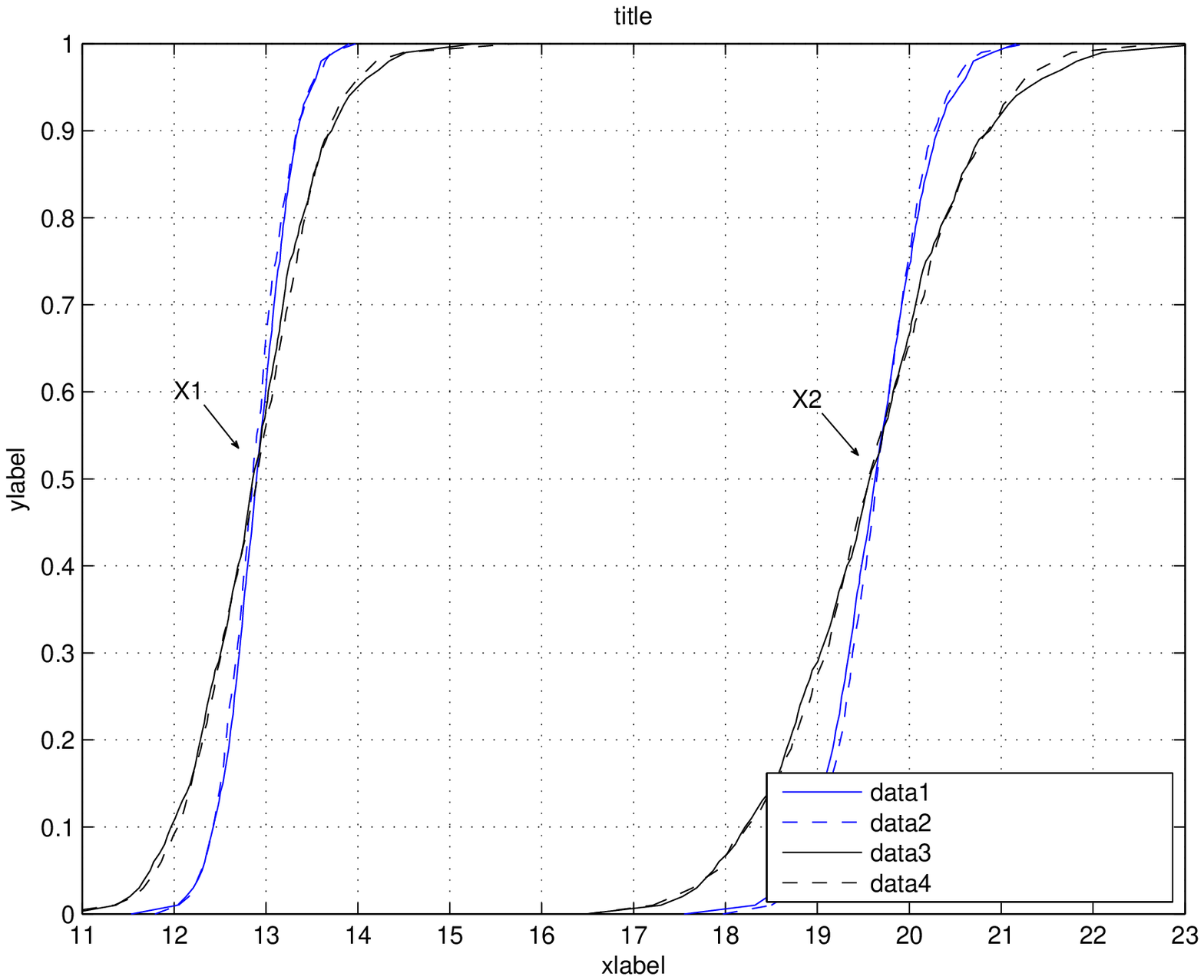}
\caption{\footnotesize{Outage probability $\text{Pr}\left(\frac{E_T}{T}> \alpha \right)$ of RZF and ZF with imperfect CSI, $K=32$, $T=3,12$ hrs and $\tau_i^2 = \tau^2 =0.05$ $\forall i$.}} \label{fig5}
\end{center} 
\end{figure}

{Although applied to a completely different context, the analytical results in \cite{Sanguinetti_2014_j} can be adapted to the system under investigation using standard random matrix theory tools. Omitting the mathematical details, from the results of Section IV in \cite{Sanguinetti_2014_j} the asymptotic power consumption with imperfect CSI takes form:
\begin{align}\label{eq:p}
\overline P(t) = \frac{c\sigma^2}{\eta^\prime } \frac{1}{K} \sum\limits_{i=1}^K \frac{\gamma_i^\prime}{l({\bf x}_i(t))}
\end{align}
where $\gamma_i^\prime=\frac{\gamma_i}{1-\tau_i^2}$ and 
\begin{align}
\eta^\prime = \left\{ {\begin{array}{*{20}{l}}
{1-c-c\sum\limits_{i=1}^K\gamma_i^\prime{\tau_i^2}}&{\text{for ZF}}\\
{1 -c \frac{\sum\limits_{i=1}^K\gamma_i}{K+\sum\limits_{i=1}^K\gamma_i} - c\sum\limits_{i=1}^K\gamma_i^\prime{\tau_i^2}}&{\text{for RZF}}
\end{array}} \right.
\end{align}
with $\tau_i$ accounting for the accuracy or quality of the estimate of ${\bf h}_i(t)$ \cite{Sanguinetti_2014_j}, i.e., $\tau_i=0$ corresponds to perfect CSI, whereas for $\tau_i =1$ the CSI is completely uncorrelated to the true channel. As seen, \eqref{eq:p} has the same general form of \eqref{1010} when imperfect CSI is available. Similar conclusions can be drawn for MRT. To validate these analytical results, Figs.~\ref{fig5} -- \ref{fig6} show the CDF of ${E_T}/{T}$ in the same operating conditions of Fig.~\ref{fig3_A} when $\forall i$ $\tau_i^2= \tau^2 =0.1$ and $\tau_i^2= \tau^2 =0.2$, respectively. The computation of $\epsilon$ and $\Sigma$ is performed through \eqref{avg_pow} and \eqref{var_pow} simply replacing $\eta$ and $\{\gamma_i; i=1,2,\ldots,K\}$ with $\eta^\prime$ and $\{\gamma_i^\prime;i=1,2,\ldots,K\}$. As seen, in both cases the simulation results match with the theoretical ones obtained from Theorem 2 using the asymptotic power given by \eqref{eq:p}. Comparing the results of Figs.~\ref{fig5} -- \ref{fig6} with those of Fig.~\ref{fig3_A}, it is seen that imperfect CSI leads to an increase of the average energy consumption.}

 \begin{figure}
\begin{center}
    \psfrag{xlabel}[c][b]{{\footnotesize{$\alpha$ [Watt]}}}
    \psfrag{ylabel}[c][t]{\footnotesize{$\text{Pr}\left(\frac{E_T}{T} > \alpha \right)$}}
    \psfrag{X1}[c][m]{\scriptsize{\text{RZF}}}
    \psfrag{X2}[c][m]{\scriptsize{\text{ZF}}}
    \psfrag{data1}[l][m]{\!\!\tiny{T = 12 [h] -- Sim.}}
    \psfrag{data2}[l][m]{\!\!\tiny{T = 12 [h] -- Theory}}
        \psfrag{data3}[l][m]{\!\!\tiny{T = 3 [h] -- Sim.}}
    \psfrag{data4}[l][m]{\!\!\tiny{T = 3 [h] -- Theory}}
        \psfrag{title}[c][m]{\!\!\!\!\footnotesize{$K=32, N=64, \tau^2 =0.15$}}
\includegraphics[width=8.2cm]{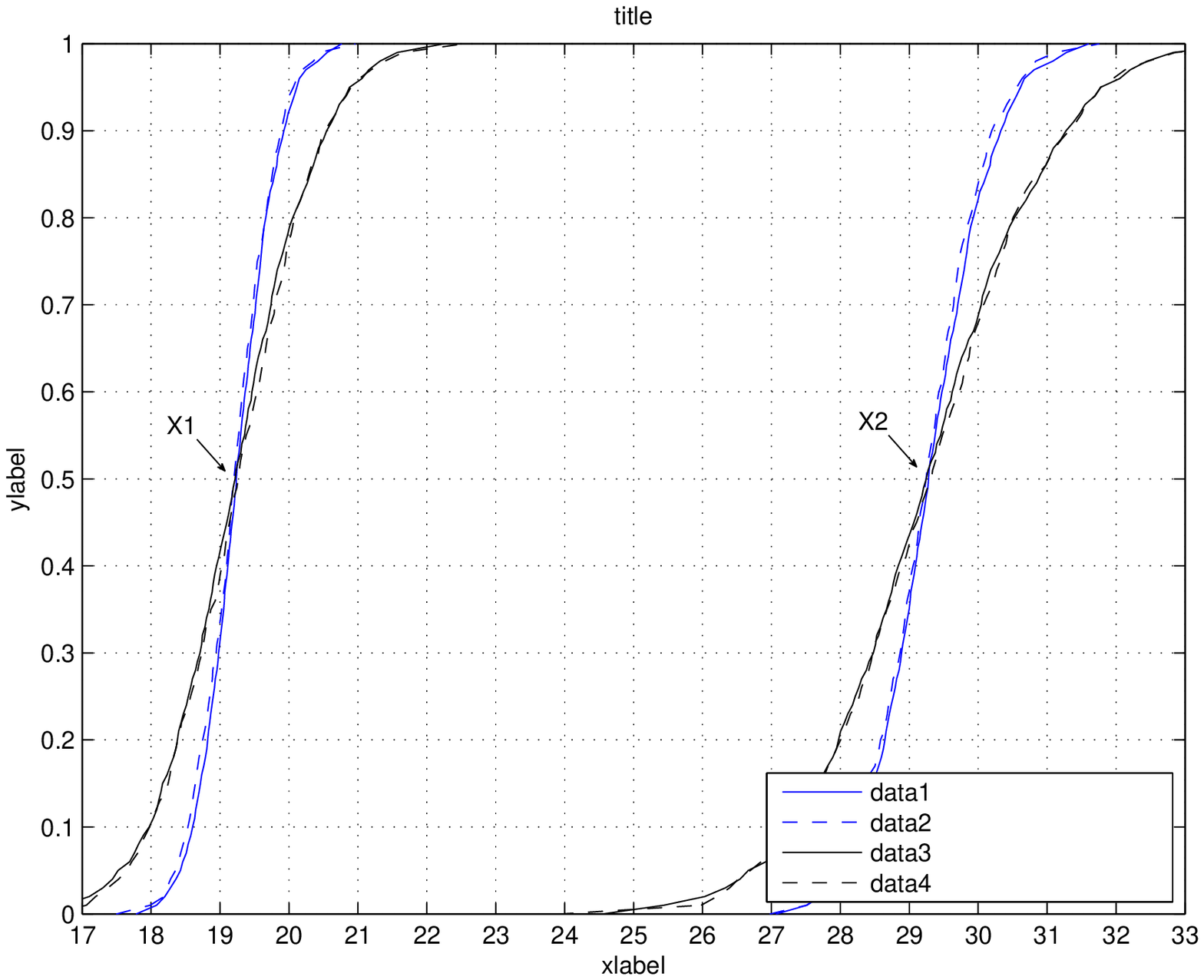}
\caption{\footnotesize{Outage probability $\text{Pr}\left(\frac{E_T}{T}> \alpha \right)$ of RZF and ZF with imperfect CSI, $K=32$, $T=3,12$ hrs and $\tau_i^2 = \tau^2 =0.15$ $\forall i$.}} \label{fig6}
\end{center}
 \end{figure}
 
\section{Conclusions and Discussions}\label{Conclusions}
In this work, we have studied the energy consumption dynamics in the downlink of a single-cell MIMO  network with $N$ antennas at the BS in which $K$ UEs move around according to a random walk model and linear precoding is used to guarantee target rates. The analysis has been conducted for a finite time interval of length $T$ when $K,N\to \infty $ with a fixed ratio when perfect CSI is available at the BS. Under these assumptions, we have shown that the energy consumption converges in distribution to a Gaussian random variable for most of the common precoding techniques and we have computed its mean and variance analytically. We have shown that user mobility plays the dominant role in determining the energy fluctuations. In particular, it turns out that the variance of energy consumption scales as $1/K$ rather than as $1/K^2$ as it happens for motionless UEs. Numerical results have been used to show that the analytical expressions yield accurate approximations when $K$ is of the order of tens (see for example Table \ref{table_values}). As an application of these results, we have dimensioned a battery-powered BS to satisfy a certain probability of running out of energy and we have also computed the cell radius to minimize the energy per unit area. The imperfect CSI has been also briefly addressed.

The analysis has also pointed out that in addition to $K$ there is another parameter that plays a key role in describing the energy fluctuations. This parameter is given by $DT/R^2$, where $R$ is the cell radius and $D$ is the so-called diffusion constant of random walks (or Brownian motions). The quantity $DT/R^2$ roughly describes to what extent the UEs have been around in the cell area. When it is relatively small (as it has been assumed in this work), each UE has not had the chance to visit most of the cell and thus the energy consumed to serve all of them is a random quantity whose fluctuations are basically controlled by $K$. However, when $DT/R^2$ is relatively large, then each UE has likely had the time to move around all the coverage area, and thus the energy consumed to serve each of them becomes approximately deterministic. This result suggests an alternative regime for the asymptotic analysis of energy consumption: the finite $K$ and $N$ but large $T$ (or more properly large $DT/R^2$) regime. This case has been analyzed in some detail in \cite{Moustakas2014_ITW} but it is still pretty much open. { In addition to this, another interesting direction for future work is to consider other mobility models such as the L\'evy flight, which seems to contain some important statistical similarities with human mobility \cite{Rhee2011_LevyHumanMobility, Scafetta2011_LevyWalkHumanMobility}.  The extension of the analysis to a multi-cell network with imperfect CSI and pilot contamination is also an interesting topic for future research.}

\section*{Appendix A\\ Proof of Theorem \ref{lemma:asymptotic-beamforming}}
We start assuming that the SINR constraints are such that the following assumption is satisfied:
\begin{equation}\label{26.11}
\lim_{K \rightarrow \infty} \sup \frac{1}{N}\sum\limits_{i=1}^K \frac{\gamma_i}{1+\gamma_i} < 1.
\end{equation}
Under these circumstances, $\eta$ in \eqref{26} is a positive quantity. Observe that $c\in (0,1)$ is sufficient for the above condition to be satisfied. To proceed further, we rewrite \eqref{23} as follows (using simple calculus)
\begin{align}\label{B1023}
\frac{\gamma_k}{\lambda_k^\star} = \frac{1}{N}{\mathbf{h}_k^H\left(\frac{1}{N}\sum\limits_{i=1,i\ne k}^K\lambda_i^\star\mathbf{  h}_i\mathbf{  h}_i^H + N\mathbf{I}_N\right)^{-1}\mathbf{h}_k}
\end{align}
from which letting
\begin{align}\label{B1}
d_k=\frac{\gamma_k}{\lambda_k^\star l(\mathbf{x}_{k})}
\end{align}
and using \eqref{h_k} we eventually obtain
\begin{align}\label{B2.111}
d_k=\frac{1}{N} {\mathbf{w}_k^H\left(\frac{1}{N}\sum\limits_{i=1,i\ne k}^K\frac{\gamma_i}{d_i}\mathbf{  w}_i\mathbf{  w}_i^H + \mathbf{I}_N\right)^{-1}\mathbf{w}_k}.
\end{align}
Assume that the quantities $\{d_k\}$ are well defined, positive and such that $0\le d_1\le d_2\le\ldots\le d_K$. Then, using monotonicity arguments, from \eqref{B2.111} it follows that
\begin{align}\nonumber
\hspace{-0.3cm}d_K\le\frac{1}{N} {\mathbf{w}_K^H\left(\frac{1}{N}\sum\limits_{i=1}^{K-1}\frac{\gamma_i}{d_K}\mathbf{  w}_i\mathbf{  w}_i^H + \mathbf{I}_N\right)^{-1}\mathbf{w}_K}=\\ = d_K\frac{1}{N} {\mathbf{w}_K^H\left(\frac{1}{N}\sum\limits_{i=1}^{K-1}{\gamma_i}\mathbf{  w}_i\mathbf{  w}_i^H + d_K\mathbf{I}_N\right)^{-1}\mathbf{w}_K}
\end{align}
or, equivalently,
\begin{align}\label{B2}
1\le \frac{1}{N} {\mathbf{w}_K^H\left(\frac{1}{N}\sum\limits_{i=1}^{K-1}{\gamma_i}\mathbf{  w}_i\mathbf{  w}_i^H + d_K\mathbf{I}_N\right)^{-1}\mathbf{w}_K}.
\end{align}
Assume now that $d_K$ is infinitely often larger than $\eta + \ell$ with $\eta$ given by \eqref{26} and $\ell > 0$ some positive value \cite{Romain2014}. Let us restrict ourselves to such a subsequence. From \eqref{B2}, using monotonicity arguments we obtain
\begin{align}\label{B3}
1\le \frac{1}{N} {\mathbf{w}_K^H\left(\frac{1}{N}\sum\limits_{i=1}^{K-1}{\gamma_i}\mathbf{  w}_i\mathbf{  w}_i^H + (\eta + \ell)\mathbf{I}_N\right)^{-1}\mathbf{w}_K}.
\end{align}
Applying standard results in random matrix theory along with the union bound and Markov inequality one gets \cite{Romain2014}
\begin{equation}\label{B4010}
\frac{1}{N} {\mathbf{w}_K^H\left(\frac{1}{N}\sum\limits_{i=1}^{K-1}{\gamma_i}\mathbf{  w}_i\mathbf{  w}_i^H + (\eta + \ell)\mathbf{I}_N\right)^{-1}\mathbf{w}_K}- e(\ell)\rightarrow 0
\end{equation}
with $e(\ell)$ being the unique positive solution to 
\begin{align}
e(\ell)  = \left(\frac{1}{N}\sum\limits_{i=1}^K\frac{ \gamma_i}{1+\gamma_i e(\ell)} + \eta + \ell \right)^{-1}.
\end{align}
From \eqref{B4010}, recalling \eqref{B3} yields 
\begin{align}\label{B7}
\lim_{K \rightarrow \infty} \inf e(\ell) \ge 1.
\end{align}
Using the fact that $e(0) =1$ (recall that $\eta$ is defined as in \eqref{26}) and that $e(\ell)$ is a decreasing function of $\ell$, it can be proved that for any $\ell>0$ \cite{Romain2014}
\begin{align}\label{B8}
\lim_{K \rightarrow \infty} \sup e(\ell) < 1.
\end{align}
This however goes against \eqref{B7} and creates a contradiction on the initial hypothesis that $d_K>\eta + \ell$ infinitely often. Therefore, we must admit that $d_K\le \eta + \ell$
for all large values of $K$. Reverting all inequalities and using similar arguments yields $d_1\ge \eta - \ell$
for all large values of $K$. Putting all these results together yields $\eta - \ell \le d_1\le d_2\le\cdots\le d_K\le \eta + \ell$ from which we may write $\mathop {\max }\nolimits_{k = 1,2,\ldots,K} \left|d_k - \eta \right| \le \ell$
for all large values of $K$ \cite{Romain2014}. Taking a countable sequence of $\ell$ going to zero, we eventually obtain $\mathop {\max }\nolimits_{k = 1,2,\ldots,K} \left|d_k - \eta \right| \rightarrow 0$ from which using \eqref{B1} and assuming that $\lim_{K \rightarrow \infty} \sup \frac{\gamma_k}{l(\mathbf{x}_{k})} < \infty$ the result in \eqref{27} follows.

\section*{Appendix B \\ Random Walk and Brownian Motion}

The convergence of a random walk towards the Brownian motion model is basically due to the central limit theorem. More precisely, observe that ${\bf x}_k(t) = \sum_{i=1}^{\left\lfloor {t/\xi} \right\rfloor} \Delta{\bf x}_k(i)$ has zero-mean and variance $\mathsf{E}\left[\|{\bf x}_k(t)\|^2\right]=t\ell^2/\xi$. Assume now that $t$ is fixed and $\xi\to0$, in the sense that $t/\xi\to\infty$. Thus, the variance of $ {\bf x}_k(t)$ can be kept finite only if the ratio $\ell^2/\xi$ is kept fixed and finite. In this limit, ${\bf x}_k(t)$ becomes a Brownian motion with diffusion coefficient $D$ given by $D = \ell^2/(4\xi)$. In practical systems, however, neither the step size $\ell$ nor the corresponding time $\xi$ vanishes. Nevertheless, the equivalence still holds true since we are basically interested in the long term (corresponding to large values of $t$) and large distance statistics of a random walk with finite $\ell$ and $\xi$ wherein each walker changes position according to a transition rule that only depends on its current position. In these circumstances, the random walk boils down to a Markov process, whose statistical properties in the large time and distance regime are still well captured by simply treating it as a Brownian motion.


\subsection{Transition probability}
The transition probability $\prob({\bf{x}},{\bf{x}}';t-t')$ of a Brownian motion in a given area $\mathcal C$ is defined as the probability of being at a location ${\bf{x}}'$ at time $t$ given that the walker was at ${\bf{x}}$ at time $t'<t$. Mathematically, $\prob({\bf{x}},{\bf{x}}';t-t')$ is obtained solving the following diffusion equation \cite{ItzyksonDrouffe_book}
\begin{equation}\label{eq:diffusion_eq_def}
  \frac{\partial \prob({\bf{x}},{\bf{x}}';t-t')}{\partial t} = {\frac{D}{2}}\nabla^2 \prob({\bf{x}},{\bf{x}}';t-t')
\end{equation}
under appropriate boundary conditions (specifying the behaviour of the user when reaching the boundary $\partial {\mathcal C}$ of ${\mathcal C}$) and subject to:
\begin{align}\label{eq:prob_init_cond_1}
 \int_{\mathcal C} \prob({\bf{x}},{\bf{x}}';t-t') d{\bf{x}} &= 1\\\label{eq:prob_init_cond}
\mathop {\lim }\limits_{t\to t'}\prob({\bf{x}},{\bf{x}}';t-t') & = \delta^2(\bx-\bx').
\end{align}
Note also that $\prob({\bf x},{\bf x}';t-t')=\prob({\bf x}',{\bf x};t-t')$ since the Laplacian operator in \eqref{eq:diffusion_eq_def} is both real and Hermitian.
Although a large variety of boundary conditions can be imposed \cite{morters_book}, the most typical ones are the following two: \emph{i)} the walker exits ${\mathcal C}$ when it reaches the boundary $\partial {\mathcal C}$; \emph{ii)} the walker bounces back into ${\mathcal C}$ when it hits $\partial {\mathcal C}$. The former choice requires to impose $\prob({\bf{x}}\in\partial {\mathcal C}, {\bf{x}}';t-t')=0$ while the latter needs $\br^T\nabla \prob({\bf{x}}\in\partial {\mathcal C},{\bf{x}}';t-t')=0$ with $\br$ being the unit vector along the radial direction.\footnote{Observe that if the walker bounces back, then $\prob({\bf{x}},{\bf{x}}';t-t')$ is always properly normalized in ${\mathcal C}$ (since the probability of exiting ${\mathcal C}$ is zero). This condition is known in the literature as Neuman boundary condition.}

\subsection{Computation of the transition probability}

The transition probability $\prob({\bf{x}},{\bf{x}}'; t-t')$ solving \eqref{eq:diffusion_eq_def} over a bounded domain $\mathcal C$ can be expressed in the following general form \cite{Jackson_EM_book}:
\begin{equation}\label{eq:soln_diff_eq}
  \prob({\bf{x}},{\bf{x}}'; t-t')= \sum_{n}\sum_{m} g^*_{n,m}({\bf{x}})g_{n,m}({\bf{x}}') e^{-\epsilon_{n,m}(t-t')}
\end{equation}
where $g_{n,m}({\bf{x}})$ is the $(n,m)$th (normalized) eigenfunction of the Laplacian operator $\nabla^2$ in ${\mathcal C}$ satisfying the boundary conditions and $\epsilon_{n,m}$ is the corresponding eigenvalue. The above result basically follows from the fact that $\{g_{n,m}({\bf{x}})\}$ form a basis in the Hilbert space $L^2(\mathcal C)$ and thus any function can be written in terms of $\{g_{n,m}({\bf{x}})\}$ (we refer the interested reader to \cite{Jackson_EM_book} for a more detailed discussion on this). Observe that $  \prob({\bf{x}},{\bf{x}}'; t-t')$ in the form of \eqref{eq:soln_diff_eq} meets \eqref{eq:prob_init_cond_1} and \eqref{eq:prob_init_cond} since the eigenfunctions are such that \cite{Jackson_EM_book}:
\begin{align}\label{eq:orthogonality}
  \int_{{\mathcal C}} g^*_{n_1,m_1}({\bf{x}})g_{n_2,m_2}({\bf{x}}) d{\bf x} &= \delta_{n_1,n_2}\delta_{m_1,m_2}\\
  \sum_{n}\sum_{m} g^*_{n,m}({\bf{x}})g_{n,m}({\bf{x}}')  &= \delta^2(\bx-\bx')
\end{align}
where $\delta^2(\bx)$ is the two-dimensional Dirac delta function. The conditions above can also be used to show that if $t''<t'<t$ then
\begin{equation}
\label{eq:Markov}
\int_{{\mathcal C}} \prob({\bf x},{\bf x}';t-t')  \prob({\bf x}',{\bf x}'';t'-t'') d{\bf x}' = \prob({\bf x},{\bf x}'';t-t'')
\end{equation}
which is a consequence of the Markov property of Brownian motions.

Computing the eigenfunctions in closed-form is not always possible for any region ${\mathcal C}$. However, if ${\cal C}$ has a circular symmetry with $|\mathcal C| = \pi R^2$ and the reflecting boundary condition is imposed, then the eigenfunctions can be expressed in polar coordinates $(r,\phi)$ as \cite{Jackson_EM_book}:
\begin{align}\label{eq:eigenfunction_circle}
  g_{n,m}(r,\phi) &= (A_{n,m})^{1/2} J_m\left(\frac{\kappa_{n,m}r}{R}\right) e^{\mathrm{i}m\phi}
\end{align}
where $\kappa_{n,m}$ is the $n$th non-trivial zero of the first derivative of the $m-$Bessel function $J_m(\cdot)$ and $  A_{n,m}^{-1} = \pi R^2 \left(J_m^2(\kappa_{n,m}) + J_{m-1}^2(\kappa_{n,m})\right)$ and the eigenvalue is $\epsilon_{n,m}=\kappa_{n,m}^2$.

\section*{Appendix C \\ Proofs of Lemmas 3 and 4}
In this appendix, we rely on the Brownian motion model (see Appendix B) to compute the mean value and covariance of ${l^{-1}(\mathbf{x}_k{(t)})}$. The former requires to evaluate
\begin{align}\label{E0}
\mathsf{E}_{\mathbf{X}_k}\left[{l^{-1}(\mathbf{x}_k{(t)})}\right] = \mathsf{E}_{\mathbf{x}_k{(0)}}\left[\mathsf{E}_{\mathbf{X}_k|{\mathbf{x}_k{(0)}}}\left[{l^{-1}(\mathbf{x}_k{(t)})}|{\mathbf{x}_k{(0)}}\right]\right]
\end{align}
where $\mathsf{E}_{\mathbf{x}_k{(0)}}\left[z|{\mathbf{x}_k{(0)}}\right]$ denotes the conditional expectation of $z$ with respect to the initial position $\mathbf{x}_k{(0)} \in \mathcal C$. On the other hand, the covariance of ${l^{-1}(\mathbf{x}_k{(t)})}$ takes the form in \eqref{E2.100}.
\begin{figure*}
\begin{align}\nonumber
{\rm{COV}}_{\mathbf{X}_k}\left[{l^{-1}(\mathbf{x}_k{(t)})},{l^{-1}(\mathbf{x}_k{(t'')})}\right] &= \mathsf{E}_{\mathbf{x}_k(0)}\left[\mathsf{E}_{\mathbf{X}_k|\mathbf{x}_k(0)}\left[{l^{-1}(\mathbf{x}_k(t))}{l^{-1}(\mathbf{x}_k(t''))}|\mathbf{x}_k(0)\right]\right] - \\ & - \mathsf{E}_{\mathbf{x}_k(0)}\left[\mathsf{E}_{\mathbf{X}_k|\mathbf{x}_k(0)}\left[{l^{-1}(\mathbf{x}_k(t))}|{\mathbf{x}_k(0)}\right]\mathsf{E}_{\mathbf{X}_k|\mathbf{x}_k(0)}\left[{l^{-1}(\mathbf{x}_k(t''))}|{\mathbf{x}_k(0)}\right]\right].\label{E2.100}
\end{align}
\hrulefill
\vskip-3mm
\end{figure*}
To simplify the notation, in the following derivations we drop the UE index $k$ and relabel $\mathbf{x}_k(t)$, $\mathbf{x}_k(t'')$ and $\mathbf{x}_k(0)$ as follows $\mathbf{x}_k(t) \to \mathbf{x}$, $\mathbf{x}_k(t'') \to \mathbf{x}''$ and $\mathbf{x}_k(0) \to \mathbf{x}'$. In addition, we call $F(\mathbf{x},t|\mathbf{x}',0)$ the probability that the UE reaches $\mathbf{x}$ at a generic time $t$ if its position at time $0$ is $\mathbf{x}'$. From Appendix B, we have that $F(\mathbf{x},t|\mathbf{x}',0)=\prob(\mathbf{x}, \mathbf{x}'; t)$ with $\prob(\mathbf{x}, \mathbf{x}'; t)$ being the solution of the diffusion equation in \eqref{eq:diffusion_eq_def}.

We start computing the mean of ${l^{-1}(\mathbf{x}{(t)})}$. Under the above assumptions, we may write
\begin{align}\label{E0.1}
\mathsf{E}_{\mathbf{X}}\left[{l^{-1}(\mathbf{x}{(t)})}\right]  = \frac{1}{|\mathcal C|}\iint_{\mathcal C} {l^{-1}(\mathbf{x})}\prob(\mathbf{x}, \mathbf{x}'; t)d{\bf x}d{\bf x}'
\end{align}
where we have taken into account that ${\bf x}'$ is uniformly distributed within the circular cell of area $|\mathcal C| = \pi R^2$. Recalling\footnote{This comes from the fact that the probability of reaching (within the same time interval) the point $\mathbf{x}$ starting from $\mathbf{x}'$ is the same as that of reaching $\mathbf{x}'$ starting from $\mathbf{x}$ (see Appendix B).} that $\prob(\mathbf{x},\mathbf{x}';t) = \prob(\mathbf{x}',\mathbf{x};t)$ and using $\int_{\mathcal C}\prob(\mathbf{x}',\mathbf{x};t)d{\bf x}' =1$ (see \eqref{eq:prob_init_cond_1} in Appendix B)
one gets
\begin{align}\label{E0.1}
\mathsf{E}_{\mathbf{X}}\left[{l^{-1}(\mathbf{x}{(t)})}\right]= \frac{1}{|\mathcal C|}\int_{\mathcal C} {l^{-1}(\mathbf{x})}d{\bf x}.
\end{align}
The covariance requires to evaluate the two terms in the right-hand-side of \eqref{E2.100}. Let us start with the first one, which is given by
\begin{align}\label{E3.4}
& \mathsf{E}_{\mathbf{x}'}\left[\mathsf{E}_{\mathbf{X}|\mathbf{x}'}\left[{l^{-1}(\mathbf{x})}{l^{-1}(\mathbf{x}'')}|\mathbf{x}'\right]\right] = \\ & \frac{1}{|\mathcal C|}\iiint_{\mathcal C} {l^{-1}({\bf{x}})}{l^{-1}({\bf{x}}'')} F(\mathbf{x},t,\mathbf{x}'',t''|\mathbf{x}',0)d{\bf x}d{\bf x}''d{\bf x}'.
\end{align}
Thanks to the Markov property of Brownian motions, one gets
\begin{align}\nonumber
F(\mathbf{x},t,\mathbf{x}'',t''|\mathbf{x}',0)  &= F(\mathbf{x},t|\mathbf{x}'',t'',\mathbf{x}',0) F(\mathbf{x}'',t''|\mathbf{x}',0) \\ & =F(\mathbf{x},t|\mathbf{x}'',t'') F(\mathbf{x}'',s|\mathbf{x}',0)\label{E4.100}
\end{align}
or, equivalenty, $F(\mathbf{x},t,\mathbf{x}'',t''|\mathbf{x}',0)=
\prob(\mathbf{x},\mathbf{x}'';t-t'') \prob(\mathbf{x}'',\mathbf{x}';t'').$
Plugging this result into \eqref{E3.4} and using $\prob(\mathbf{x}'',\mathbf{x}';t'') = \prob(\mathbf{x}',\mathbf{x}'';t'')$ with $\int_{\mathcal C}\prob(\mathbf{x}',\mathbf{x}'';t'') d{\bf x}'= 1$ one gets
\begin{align}\label{E6}
&\mathsf{E}_{\mathbf{x}'}\left[\mathsf{E}_{\mathbf{X}|\mathbf{x}'}\left[{l^{-1}(\mathbf{x})}{l^{-1}(\mathbf{x}'')}|\mathbf{x}'\right]\right] = \\ & \frac{1}{|\mathcal C|} \iint_{ \mathcal C} {l^{-1}({\bf{x}})}{l^{-1}({\bf{x}}'')}  \prob(\mathbf{x},\mathbf{x}'';t - t'')d{\bf x}d{\bf x}''.
\end{align}
We are now left with the computation of the second term in \eqref{E2.100}, which is explicitly given by
\begin{align}\label{E8}
& \mathsf{E}_{\mathbf{x}'}\left[\mathsf{E}_{\mathbf{X}|\mathbf{x}'}\left[{l^{-1}(\mathbf{x})}|{\mathbf{x}'}\right]\mathsf{E}_{\mathbf{X}|\mathbf{x}'}\left[{l^{-1}(\mathbf{x}'')}|{\mathbf{x}'}\right]\right] = \\ \label{E8.1}& \frac{1}{|\mathcal C|}\iiint_{ \mathcal C} {l^{-1}({\bf{x}})}{l^{-1}({\bf{x}}'')} \prob(\mathbf{x},\mathbf{x}';t)\prob(\mathbf{x}'',\mathbf{x}';t'')d{\bf x}d{\bf x}''d{\bf x}'.
\end{align}
Using $\prob(\mathbf{x}'',\mathbf{x}';t'') = \prob(\mathbf{x}',\mathbf{x}'';t'')$ and \eqref{eq:Markov} lead to 
\begin{align}\label{E9}
& \mathsf{E}_{\mathbf{x}'}\left[\mathsf{E}_{\mathbf{X}|\mathbf{x}'}\left[{l^{-1}(\mathbf{x})}|{\mathbf{x}'}\right]\mathsf{E}_{\mathbf{X}|\mathbf{x}'}\left[{l^{-1}(\mathbf{x}'')}|{\mathbf{x}'}\right]\right] = \\ &\frac{1}{|\mathcal C|} \iint_{ \mathcal C} {l^{-1}({\bf{x}})}{l^{-1}({\bf{x}}'')} \prob(\mathbf{x},\mathbf{x}'';t + t'')d{\bf x}d{\bf x}''
\end{align}
Plugging the above results together into \eqref{E2.100}, we eventually obtain
\begin{align}\label{E10}
&{\rm{COV}}_{\mathbf{X}}\left[{l^{-1}(\mathbf{x})},{l^{-1}(\mathbf{x}'')}\right]  = \\ &\frac{1}{|\mathcal C|} \iint_{\mathcal C} {l^{-1}({\bf{x}})}{l^{-1}({\bf{x}}'')}Q(\mathbf{x},\mathbf{x}{''};t, t'')d{\bf x}{}d{\bf x}{''}.
\end{align}
where $Q(\mathbf{x},\mathbf{x}'';t, t'')$ is defined as
\begin{align}\label{E11}
& Q(\mathbf{x},\mathbf{x}'';t, t'') = \prob(\mathbf{x},\mathbf{x}'';t - t'') - \prob(\mathbf{x},\mathbf{x}'';t+t'').
\end{align}
We are now left with substituting into \eqref{E10} the closed-form expressions of $\prob(\mathbf{x},\mathbf{x}'';t+t'')$ and $\prob(\mathbf{x},\mathbf{x}'';t - t'')$ as obtained through \eqref{eq:soln_diff_eq} using the eigenfunctions given by \eqref{eq:eigenfunction_circle}. After standard but lengthy computations (not shown for space limitations), we eventually get the result in \eqref{101} of Lemma \ref{lemma_covariance}.

\section*{Appendix D \\ Proof of Theorem \ref{CLT}}

In this appendix, we outline the proof for the central limit theorem of the energy consumption $E_T$. The first step is to observe that since the underlying Brownian motions of UEs are continuous, we may think of the integral in \eqref{E_T} as a limit of a finite sum, i.e.,
\begin{align}\label{eq:E_T_as a sum}
  E_T & = \int_0^T {P(t)dt} =\lim_{L\rightarrow\infty} \frac{T}{L} \sum_{n=1}^L P\left({\bf x}(n),{\bf W}(n)\right)
\end{align}
where ${\bf x}(n)$ and ${\bf W}(n)$ corresponds to ${\bf x}(t)$ and ${\bf W}(t)$ evaluated at $t =nT/L=n\xi$, respectively. In writing the above equation, we have explicitly specified the dependence of $P(t)$ on the UE positions ${\bf x}(n) = \{{\bf x}_k(n); k=1,2,\ldots,K\}$ and on the fading coefficients ${\bf W}(n)=\{{\bf W}_k(n);k=1,2,\ldots,K\}$. We assume that for any $k$ the channel fading coefficients in ${\bf W}_k(n)$ are independent over $n$ since the fast fading channel ${\bf W}_k(t)$ decorrelates approximately as $\sim \lambda\xi/\ell$, which is much smaller than the sampling time $\xi$.

%
%
%
%
To proceed further, we let $\Delta E_T = E_T  - \mathsf{E}\left[E_T \right]$ be the difference between $E_T$ and its average value $\mathsf{E}\left[E_T \right]$ and rewrite $\Delta E_T$ as the sum of two terms, i.e., $\Delta E_T = \Delta_1 + \Delta_2$ with
\begin{align}\label{eq:diff_ET}
\Delta_1 &= E_T - \int_0^T \overline{P}(t) dt \\   \Delta_2 &= \int_0^T  \overline{P}(t) dt -  \mathsf{E}\left[E_T \right]
\end{align}
Observe that the variance of $\Delta_1$ is just $T$ times the variance of $P(t)-{\overline P}(t)$ averaged over user locations, i.e.,
\begin{align}
\mathsf{VAR}\left[\Delta_1\right]=T\mathsf{E}_X\left[\mathsf{VAR}_{\bf W}\left[P(t)-{\overline P}(t)\right]\right].
\end{align}
In \cite{Bai2004_CLT_covariance_matrices}, the authors show that $K\mathsf{VAR}\left[P(t)-{\overline P}(t)\right]$ is finite in the limit $K\to\infty$. Since the pathloss function is bounded from below, it follows that $K\mathsf{E}_X\left[\mathsf{VAR}\left[P(t)-{\overline P}(t)\right]\right]$ is also bounded, i.e.,
\begin{equation}\label{eq:supvar}
  \lim_{K\to \infty} \sup K\mathsf{E}_X\left[\mathsf{VAR}\left[P(t)-{\overline P}(t)\right]\right] <\infty.
\end{equation}
This means that the fluctuations of $\Delta_1$ are of order $O(K^{-1})$.

To analyze the second term $\Delta_2$, we use \eqref{1010} and observe that the integral in \eqref{eq:diff_ET} can be written as a sum over independent Brownian paths of the $K$ users in the system. Using the results of Appendix C, the term corresponding to $k$th UE has mean and variance given by
\begin{align}\label{eq:app:meanE_k}
{\frac{1}{K}\frac{c\sigma^2}{\eta} \int_0^T \mathsf{E}_{\mathbf{X}_k}\left[\frac{\gamma_k}{\ell({\bf x}_k(t))}\right]  = \frac{\gamma_k}{|\mathcal C|}\frac{T}{K}\frac{c\sigma^2}{\eta}\int_{\mathcal C} \frac{1}{l(\mathbf{x}_k{(\tau)})}d{\bf x}_k{(\tau)}}
\end{align}
and
\begin{align}\nonumber
\frac{1}{K^2} \left(\frac{c\sigma^2}{\eta}\right)^2\iint_0^T &{\mathsf{COV}}_{\mathbf{X}_k}\left[\frac{\gamma_k}{\ell({\bf x}_k(\tau))},\frac{\gamma_k}{\ell({\bf x}_k(s))}\right] = \\ & =\gamma_k^2 \frac{TR^2}{DK^2} \left(\frac{c\sigma^2}{\eta}\right)^2\Theta
\label{eq:app:varE_k}
\end{align}
where $\Theta$ is obtained as in \eqref{105}. Hence, $\Delta_2$ has finite mean and fluctuations of order $K^{-1/2}$ and its variance can be obtained summing \eqref{eq:app:varE_k} over $k$. This leads to 
\begin{align}\label{var_pow_Appendix}
\mathsf{VAR}\left[\Delta_2\right] &= \left(\frac{c\sigma^2}{\eta}\right)^2\left(\frac{1}{K}\sum\limits_{i=1}^K\gamma_i^2\right) \frac{T R^{2}}{KD} \Theta.
\end{align}
As a result of the above, we eventually have that
\begin{align}\label{eq:CLT}
\sqrt{K}\left(\frac{E_T - \epsilon }{\sqrt{\Sigma}} \right)\mathop {\longrightarrow}  \limits_{K,N \to \infty}^{\mathcal D} \mathcal{N}(0,1)
\end{align}
as stated in Theorem \ref{CLT}. As seen, the proof is basically articulated in two steps. First, we exploit the results in \cite{Bai2004_CLT_covariance_matrices} to point out that that the power variance due to the fast fading scales as $O(K^{-2})$ for all considered schemes. Then, we take advantage of \eqref{1010} and use the results in Lemmas \ref{lemma_mean} and \ref{lemma_covariance} to deal with the UE movements and to prove the results. 
%
%

\section*{Acknowledgment}
The authors thank Dr.~Romain Couillet for helpful discussions on the large system analysis of the optimal linear precoding and in particular for the proof of Theorem 1.

\ifCLASSOPTIONcaptionsoff
  \newpage
\fi

\bibliographystyle{IEEEtran}
\bibliography{IEEEabrv,refs}

\begin{thebibliography}{10}
\providecommand{\url}[1]{#1}
\csname url@samestyle\endcsname
\providecommand{\newblock}{\relax}
\providecommand{\bibinfo}[2]{#2}
\providecommand{\BIBentrySTDinterwordspacing}{\spaceskip=0pt\relax}
\providecommand{\BIBentryALTinterwordstretchfactor}{4}
\providecommand{\BIBentryALTinterwordspacing}{\spaceskip=\fontdimen2\font plus
\BIBentryALTinterwordstretchfactor\fontdimen3\font minus
  \fontdimen4\font\relax}
\providecommand{\BIBforeignlanguage}[2]{{%
\expandafter\ifx\csname l@#1\endcsname\relax
\typeout{** WARNING: IEEEtran.bst: No hyphenation pattern has been}%
\typeout{** loaded for the language `#1'. Using the pattern for}%
\typeout{** the default language instead.}%
\else
\language=\csname l@#1\endcsname
\fi
#2}}
\providecommand{\BIBdecl}{\relax}
\BIBdecl

\bibitem{Sanguinetti2014_ICASSP}
L.~Sanguinetti, A.~Moustakas, E.~Bjornson, and M.~Debbah, ``Energy consumption
  in multi-user {MIMO} systems: Impact of user mobility,'' in \emph{Acoustics,
  Speech and Signal Processing (ICASSP), 2014 IEEE International Conference
  on}, Florence, Italy, May 2014.

\bibitem{Cisco11}
``Global mobile data traffic forecast update, 2010-2015 ({Cisco} visual
  networking index),'' \emph{White Paper}, Feb. 2011.

\bibitem{Fehske11}
A.~Fehske, G.~Fettweis, J.~Malmodin, and G.~Biczok, ``The global footprint of
  mobile communications: the ecological and economic perspective,'' \emph{IEEE
  Commun. Mag.}, vol.~49, no.~8, pp. 55--62, Aug. 2011.

\bibitem{Chen2011}
Y.~Chen, S.~Zhang, S.~Xu, and G.~Li, ``Fundamental trade-offs on green wireless
  networks,'' \emph{IEEE Commun. Mag.}, vol.~49, no.~6, pp. 30--37, June 2011.

\bibitem{Decreusefond13}
L.~Decreusefond, P.~Martins, and T.-T. Vu, ``Modeling energy consumption in
  cellular networks,'' in \emph{in Proceedings of the 25th International
  Teletraffic Conference, France}, 2013, pp. 1--5.

\bibitem{paul_book}
W.~Paul and J.~Baschnagel, \emph{Stochastic Processes: From Physics to
  Finance}.\hskip 1em plus 0.5em minus 0.4em\relax Springer Verlag, 2013, vol.
  2nd Edition.

\bibitem{morters_book}
P.~M{\"o}rters and Y.~Peres, \emph{Brownian motion}.\hskip 1em plus 0.5em minus
  0.4em\relax Cambridge University Press, 2010.

\bibitem{Rhee2011_LevyHumanMobility}
I.~Rhee, M.~Shin, S.~Hong, K.~Lee, S.~J. Kim, and S.~Chong, ``On the
  {Levy}-walk nature of human mobility,'' \emph{IEEE/ACM Transactions on
  Networking}, vol.~19, no.~3, pp. 630 -- 643, June 2011.

\bibitem{Scafetta2011_LevyWalkHumanMobility}
N.~Scafetta, ``Understanding the complexity of the {Levy}-walk nature of human
  mobility with a multi-scale cost/benefit model,'' \emph{Chaos: An
  Interdisciplinary Journal of Nonlinear Science}, vol.~21, no.~4, 2011.

\bibitem{Wiesel06}
A.~Wiesel, Y.~Eldar, and S.~Shamai, ``Linear precoding via conic optimization
  for fixed {MIMO} receivers,'' \emph{IEEE Trans. Signal Process.}, vol.~54,
  no.~1, pp. 161--176, Jan. 2006.

\bibitem{Bjornson2012a}
E.~Bj{\"{o}}rnson, G.~Zheng, M.~Bengtsson, and B.~Ottersten, ``Robust monotonic
  optimization framework for multicell {MISO} systems,'' \emph{IEEE Trans.
  Signal Process.}, vol.~60, no.~5, pp. 2508--2523, May 2012.

\bibitem{Huang2012}
Y.~Huang, C.~W. Tan, and B.~Rao, ``Large system analysis of power minimization
  in multiuser {MISO} downlink with transmit-side channel correlation,'' in
  \emph{Int. Symp. Inf. Theory and its Applications (ISITA)}, Oct. 2012, pp.
  240--244.

\bibitem{Zakhour2012}
R.~Zakhour and S.~Hanly, ``Base station cooperation on the downlink: Large
  system analysis,'' \emph{IEEE Trans. Inf. Theory}, vol.~58, no.~4, pp.
  2079--2106, Apr. 2012.

\bibitem{Asgharimoghaddam14}
A.~T. H.~Asgharimoghaddam and N.~Rajatheva, ``Decentralizing the optimal
  multi-cell beamforming via large system analysis,'' in \emph{Proc. of IEEE
  Int. Conf. Commun.}, Sydney, Australia, June 2014.

\bibitem{Romain2014}
R.~Couillet and M.~McKay, ``Large dimensional analysis and optimization of
  robust shrinkage covariance matrix estimators,'' \emph{Journal of
  Multivariate Analysis}, vol. 131, no.~0, pp. 99 -- 120, 2014.

\bibitem{Wagner12}
S.~Wagner, R.~Couillet, M.~Debbah, and D.~T.~M. Slock, ``Large system analysis
  of linear precoding in correlated {MISO} broadcast channels under limited
  feedback,'' \emph{IEEE Trans. Inf. Theory}, vol.~58, no.~7, pp. 4509--4537,
  July 2012.

\bibitem{Evans2013}
R.~Muharar, R.~Zakhour, and J.~Evans, ``Optimal power allocation and user
  loading for multiuser {MISO} channels with regularized channel inversion,''
  \emph{IEEE Trans. Commun.}, vol.~61, no.~12, pp. 5030--5041, Dec. 2013.

\bibitem{Sanguinetti2014_Globecom}
L.~Sanguinetti, E.~Bj{\"{o}}rnson, M.~Debbah, and A.~Moustakas, ``Optimal
  linear pre-coding in multi-user {MIMO} systems: A large system analysis,'' in
  \emph{Proc. IEEE Global Telecomm. Conf. (GLOBECOM)}, Austin, Texas, USA,
  2014.

\bibitem{Bjornson:2013kc}
E.~Bj{\"{o}}rnson and E.~Jorswieck, ``Optimal resource allocation in
  coordinated multi-cell systems,'' \emph{Foundations and Trends in
  Communications and Information Theory}, vol.~9, no. 2-3, pp. 113--381, 2013.

\bibitem{Bai2004_CLT_covariance_matrices}
Z.~D. Bai and J.~W. Silverstein, ``{CLT} for linear spectral statistics of
  large dimensional sample covariance matrices,'' \emph{The Annals of
  Probability}, vol.~32, pp. 553--605, 2004.

\bibitem{Camp02asurvey}
T.~Camp, J.~Boleng, and V.~Davies, ``A survey of mobility models for ad hoc
  network research,'' \emph{Wireless Commun. \& Mobile Comp.: Special Issue on
  Mobile Ad-Hoc Networking: Research, Trends and Applications}, vol.~2, 2002.

\bibitem{Calcev2007}
G.~Calcev, D.~Chizhik, B.~Goransson, S.~Howard, H.~Huang, A.~Kogiantis,
  A.~Molisch, A.~Moustakas, D.~Reed, and H.~Xu, ``A wideband spatial channel
  model for system-wide simulations,'' \emph{IEEE Trans. Veh. Technol.},
  vol.~56, no.~2, March 2007.

\bibitem{Adhikary2013}
A.~Adhikary, J.~Nam, J.-Y. Ahn, and G.~Caire, ``Joint spatial division and
  multiplexing -- {T}he large-scale array regime,'' \emph{IEEE Trans. Inf.
  Theory}, vol.~59, no.~10, pp. 6441 -- 6463, Oct 2013.

\bibitem{Rusek2013a}
F.~Rusek, D.~Persson, B.~Lau, E.~Larsson, T.~Marzetta, O.~Edfors, and
  F.~Tufvesson, ``Scaling up {MIMO}: Opportunities and challenges with very
  large arrays,'' \emph{IEEE Signal Process. Mag.}, vol.~30, no.~1, pp. 40--60,
  Jan. 2013.

\bibitem{Bjornson:2014mag}
E.~Bjornson, M.~Bengtsson, and B.~Ottersten, ``Optimal multiuser transmit
  beamforming: A difficult problem with a simple solution structure [lecture
  notes],'' \emph{IEEE Signal Process. Mag.}, vol.~31, no.~4, pp. 142 -- 148,
  July 2014.

\bibitem{Hachem2007_DeterministicEquivCertainFunctionalsRandomMatrices}
W.~Hachem, P.~Loubaton, and J.~Najim, ``Deterministic equivalents for certain
  functionals of large random matrices,'' \emph{The Annals of Applied
  Probability}, vol.~17, no.~3, pp. 875--930, 2007.

\bibitem{Couillet_Book}
R.~Couillet and M.~Debbah, \emph{Random matrix methods for wireless
  communications}.\hskip 1em plus 0.5em minus 0.4em\relax Cambridge, New York:
  Cambridge University Press, 2011.

\bibitem{Abramowitz_Stegun_book}
M.~Abramowitz and I.~A. Stegun, \emph{Handbook of Mathematical
  Functions}.\hskip 1em plus 0.5em minus 0.4em\relax New York: Dover
  Publications, Inc, 1972.

\bibitem{Gradshteyn_Ryzhik_book}
I.~S. Gradshteyn and I.~M. Ryzhik, \emph{Table of Integrals, Series and
  Products}.\hskip 1em plus 0.5em minus 0.4em\relax New York: Academic Press,
  1995.

\bibitem{Emil13TW}
\BIBentryALTinterwordspacing
E.~Bj{\"o}rnson, L.~Sanguinetti, J.~Hoydis, and M.~Debbah, ``Optimal design of
  energy-efficient multi-user {MIMO} systems: {Is} massive {MIMO} the answer?''
  \emph{submitted to IEEE Trans. Wireless Comm.}, April 2014. [Online].
  Available: \url{http://arxiv.org/abs/1403.6150}
\BIBentrySTDinterwordspacing

\bibitem{EARTH_D23}
\BIBentryALTinterwordspacing
G.~Auer, O.~Blume, V.~Giannini, I.~Godor, M.~Imran, Y.~Jading, E.~Katranaras,
  M.~Olsson, D.~Sabella, P.~Skillermark, and W.~Wajda, \emph{D2.3: Energy
  efficiency analysis of the reference systems, areas of improvements and
  target breakdown}.\hskip 1em plus 0.5em minus 0.4em\relax INFSO-ICT-247733
  EARTH, ver.~2.0, 2012. [Online]. Available: \url{http://www.ict-earth.eu/}
\BIBentrySTDinterwordspacing

\bibitem{Viering_2002}
I.~Viering, H.~Hofstetter, and W.~Utschick, ``Spatial long-term variations in
  urban, rural and indoor environments,'' in \emph{the 5th Meeting of COST273},
  Lisbon, Portugal, 2002, pp. 103 -- 108.

\bibitem{Vucic2009}
N.~Vucic, H.~Boche, and S.~Shi, ``Robust transceiver optimization in downlink
  multiuser {MIMO} systems,'' \emph{IEEE Trans. Signal Process.}, vol.~57,
  no.~9, pp. 3576--3587, Sept 2009.

\bibitem{Caire2010}
G.~Caire, N.~Jindal, M.~Kobayashi, and N.~Ravindran, ``Multiuser {MIMO}
  achievable rates with downlink training and channel state feedback,''
  \emph{IEEE Trans. Inf. Theory}, vol.~56, no.~6, pp. 2845--2866, June 2010.

\bibitem{Gonzalez2013}
J.~Gonzalez-Coma, M.~Joham, P.~Castro, and L.~Castedo, ``Power minimization in
  the multiuser downlink under user rate constraints and imperfect transmitter
  {CSI},'' in \emph{IEEE Int. Conf. Acoustics, Speech and Signal Process.
  (ICASSP)}, May 2013.

\bibitem{Muller13}
A.~Muller, E.~Bjornson, R.~Couillet, and M.~Debbah, ``Analysis and management
  of heterogeneous user mobility in large-scale downlink systems,'' in
  \emph{Asilomar Conference on Signals, Systems and Computers}, Nov. 2013, pp.
  773 -- 777.

\bibitem{Sanguinetti_2014_j}
\BIBentryALTinterwordspacing
L.~Sanguinetti, A.~L. Moustakas, and M.~Debbah, ``Interference management in
  {5G} reverse {TDD} {HetNets}: A large system analysis,'' \emph{submitted to
  J. Sel. Areas Commun.}, July 2014. [Online]. Available:
  \url{http://arxiv.org/abs/1407.6481}
\BIBentrySTDinterwordspacing

\bibitem{Moustakas2014_ITW}
A.~L. Moustakas, L.~Sanguinetti, and M.~Debbah, ``Effects of mobility on user
  energy consumption and total throughput in a massive {MIMO} system,'' in
  \emph{Information Theory Workshop (ITW), 2014 IEEE}, Hobart, Tasmania, Nov
  2014, pp. 292--296.

\bibitem{ItzyksonDrouffe_book}
C.~Itzykson and J.~M. Drouffe, \emph{Statistical Field Theory}.\hskip 1em plus
  0.5em minus 0.4em\relax Cambridge: Cambridge Monographs on Mathematical
  Physics, 1991, vol.~1.

\bibitem{Jackson_EM_book}
J.~D. Jackson, \emph{Classical Electrodynamics}, 3rd~ed.\hskip 1em plus 0.5em
  minus 0.4em\relax New York: J. Wiley \& Son, Inc., 1998.

\end{thebibliography}

\end{document}